\def\ps@headings{%
\def\@oddhead{\mbox{}\scriptsize\rightmark \hfil \thepage}%
\def\@evenhead{\scriptsize\thepage \hfil \leftmark\mbox{}}%
\def\@oddfoot{}%
\def\@evenfoot{}}
\begin{document}
\title{Flow-level performance of random wireless networks}
\author{Richard Combes and Eitan Altman}

\author{
\IEEEauthorblockN{Richard Combes (\IEEEauthorrefmark{1},\IEEEauthorrefmark{2}), and Eitan Altman (\IEEEauthorrefmark{2})}

\IEEEauthorblockA{\IEEEauthorrefmark{1}KTH, Royal Institute of Technology, Stockholm, Sweden}

\IEEEauthorblockA{\IEEEauthorrefmark{2}INRIA Sophia Antipolis, France}
}

\date{\today}
\maketitle
\begin{abstract}
    We study the flow-level performance of random wireless networks. The locations of base stations (BSs) follow a Poisson point process. The number and positions of active users are dynamic. We associate a queue to each BS. The performance and stability of a BS depend on its load. In some cases, the full distribution of the load can be derived. Otherwise we derive formulas for the first and second moments. Networks on the line and on the plane are considered. Our model is generic enough to include features of recent wireless networks such as 4G (LTE) networks. In dense networks, we show that the inter-cell interference power becomes normally distributed, simplifying many computations. Numerical experiments demonstrate that in cases of practical interest, the loads distribution can be well approximated by a gamma distribution with known mean and variance.
	 \footnote{Richard Combes is with KTH. Major parts of this work were done when he was in INRIA.}
\end{abstract}
\begin{IEEEkeywords} Wireless networks, Performance Evaluation, Queuing theory, Traffic models, Flow-level dynamics, Stochastic geometry, Point processes. \end{IEEEkeywords}

\section{Introduction}
	
	The most straightforward approach to performance evaluation of wireless networks is to assume that \acp{BS} locations are deterministic and follow a regular pattern such as an hexagonal grid on the plane. A more recent trend is to consider random \acp{BS} locations. Namely \acp{BS} locations form a stationary point process rather than a deterministic pattern. One of the motivations for this approach is that operational networks are hardly regular, due to irregular city planning, legal constraints on \ac{BS} placement and non homogeneity of traffic density.  When the \acp{BS} locations follow a \ac{PPP}, many performance indicators can be calculated in closed form or with simple numerical integrals, so that this approach is tractable. A summary of both the theory and networking applications can be found in \cite{Baccelli2009,Baccelli2009a}.
	
	Another characteristic of wireless network is the dynamic user behavior (also called \emph{flow-level dynamics}). Users enter the network at random locations and instants to receive a service, and leave after service completion. Users may move during their service. Network resources are shared between active users so that the time spent by different users is not independent i.e there is \emph{congestion}. Congestion due to dynamic user behavior is modeled by \emph{queueing theory}: each \ac{BS} is modeled as a queue, and the system performance is derived from its stationary distribution, providing that the queue is stable. Queuing models for wireless networks serving streaming traffic \cite{Kaufman1981,Roberts1981,BaccelliBartekKarray}, elastic traffic \cite{Bonald-dimensioning,Altman:2002:CMC:570645.570671} and a mix of both \cite{Bonald:2004:PBI:1012888.1005716} have been studied in the literature. An important characteristic is the performance of a \ac{BS} is its \emph{load}. In particular, without admission control, the number of active users grows without bound if the load is strictly larger than $1$.
	
	\underline{Related work}:
	
	Assuming that users are served by the closest \ac{BS}, the zones served by \acp{BS} are Poisson-Voronoi cells. The geometry of Poisson-Voronoi cells is a well studied topic because of its applications to physics and more recently networking. The mean and variance of the cell area is studied in \cite{Gilbert1962}. Statistical studies show that the cell area distribution can be approximated with good accuracy by a gamma distribution \cite{Kiang1966} or a generalized gamma distribution~\cite{Ferenc2007PhyA}. \cite{Calka2002} investigates the distribution of the smallest disk containing the Voronoi cell which can serve as an upper bound. The number of sides of the cell was studied in \cite{Calka2003sides}. Conditional to the number of faces of the cell, \cite{Zuyev1992} showed that the fundamental area of the cell follows a Gamma distribution and \cite{Calka2003} gives the distribution of the cell area. A more general gamma-type result is given in \cite{MollerZuyev1996}. In \cite{Foss1996}, the authors consider the distribution of the integral of a function of the distance over the cell, and calculate its first and second moments as well as its tail behavior. 
	
	There is a large amount of recent results on the performance evaluation of wireless networks using stochastic geometry. Most works assume that \acp{BS} locations follow a \ac{PPP}. Distribution of the covered areas is studied in \cite{BaccelliBT02}. The distribution of the \ac{SINR} and data rate in downlink cellular networks is derived in~\cite{AndrewsBaccelliGeometry}. \cite{Novlan2013} treats the uplink case. Downlink heterogenous networks are considered in~\cite{Singh2013,Dhillon2013,Dhillon2011,Novlan2012,Lin2012}, and the different types of \acp{BS} are assumed to follow independent \acp{PPP}. User association and fractional frequency reuse are studied in  \cite{Jo2011} and \cite{Novlan2012} respectively. The handover probability of a typical user is studied in \cite{decreusefond2012}. Concerning non-Poisson networks, \cite{Ganti2012} studies the distribution of shot-noise and \cite{Guo2013} examines the performance gap between \ac{PPP} and non-Poisson networks. 
	
	All these results concern ``single user'' performance evaluation and do not take into account dynamic user behavior. Namely, only the distribution of the \ac{SINR}, data rate etc. for a single user located at the origin are derived.  In \cite{Yu2011,Singh2013,Dhillon2013}, a random number of users is considered. However, the time a user spends in the network does not depend on his position or on the number of users connected to the same serving \ac{BS} i.e the congestion is not taken into account. For elastic traffic for instance, the time spent by a user to transmit a given amount of data depends on his \emph{throughput} which is a function of its position, interference and congestion. Furthermore~\cite{Yu2011,Singh2013} approximate the cell size distribution by a gamma distribution, as proposed by~\cite{Kiang1966,Ferenc2007PhyA}.

	\underline{Our contribution}: 
	
		In this paper, we combine stochastic geometry and queuing theory. We assume that \acp{BS} locations follow a \ac{PPP}, and we study the distribution of the \acp{BS} loads. The \ac{BS} load is expressed as the integral of a function over the area covered by this \ac{BS}, depending both on the distance to the \ac{BS} and the interference power created by interfering \acp{BS}. Hence the time spent in the network by a user can depend on its location, the interference and the number of users connected to the same \ac{BS}. We derive formulas to calculate characteristics of the loads distribution. We study both line and plane networks. In the case of line networks, in many cases of interest we can derive the full distribution of the load in terms of its Laplace/Fourier transform. For plane networks, we give formulas for the first and second moments of the loads. Hence we extend the results of~\cite{Foss1996} to take into account inter-cell interference. We do not use the gamma approximation of the cell size distribution as in~\cite{Yu2011,Singh2013}. Interference is modeled as \emph{Poisson shot-noise} and its distribution is well studied , e.g \cite{Lowen1990,Baccelli2009,Gulati2010}.	
	
	The rest of the paper is organized as follows: in section~\ref{sec:model}, we introduce the model considered and highlight the link between queuing theory and stochastic geometry. In section~\ref{sec:traffic_models}, we explain how our model describes many features of wireless networks, including types of traffic (voice, streaming, data etc.), frequency reuse, channel-aware scheduling and so on. In section~\ref{sec:line_networks} we consider \ac{PPP} networks on a line and show that in some cases the full distribution of the load can be obtained. We turn to \ac{PPP} networks on the plane in section~\ref{sec:plane_networks}, and calculate the first and second moments of the load. In section~\ref{sec:numerical_experiments} we perform numerical experiments to study the load distribution statistically. Section~\ref{sec:conclusion} concludes the paper.

\section{The model}\label{sec:model}

	\subsection{The basic model}
	We consider~\acp{BS} placed in a Euclidian space $M$ equipped with the usual Euclidean norm $\norm{.}$. Typically we will consider $M = \RR$ for networks on a line and $M = \RR^2$ for networks in the plane. The \acp{BS} form a \ac{PPP} on $M$ with intensity $\lambda$ denoted $\Phi = \Set{x_n}_{n \in \ZZ}$. We denote by $\probap{.}$ and $\expecp{.}$ the Palm probability and expectation with respect to $\Phi$. See \cite{DaleyVereJones} for instance for the definition of Palm probability.
	Users are connected to the closest \ac{BS} (in distance) and we define ${\cal C}_n$ the zone served by \ac{BS} $n$:
	\eqs{  {\cal C}_n  = \Set{ z \in M :  \norm{z - x_n} \leq  \norm{z - x_{n^\prime}} , \forall n^\prime \neq n}.}
	We use the convention that $0 \in {\cal C}_0$. The set ${\cal C}_n$ is a Voronoi cell, and the collection $\Set{{\cal C}_n}_{n \in \ZZ}$ is known as the Poisson-Voronoi tessellation of $M$. We define ${\cal C}(0)$ the Voronoi cell of the point process $\Phi \cup \Set{0}$. By Slivnyak's theorem, ${\cal C}(0)$ and ${\cal C}_0$ have the same distribution under Palm probability. We call ${\cal C}(0)$ the typical cell. Consider marks $\Set{G_n}_{n \in \ZZ}$ in $\RR^{+}$, and a function $h: \RR^+ \to \RR^+$. 
	We define the \emph{shot-noise} at location $z$ by:
	\eqs{
		{\cal I}(z) = \sum_{n \in \ZZ} G_n h( \norm{z - x_n}). 
	}
	The random variable ${\cal I}(z)$ serves to model the inter-cell interference received at location $z$. The function $h$ models the signal attenuation due to distance. The marks $G_n$ model for instance shadowing and frequency reuse. We assume that the marks are \ac{i.i.d.} and that they are independent of $\Phi$. 	Results on the distribution of shot noise generated by a \ac{PPP} are recalled in appendix~\ref{sec:shotnoise}. By a slight abuse of notation we denote $G_0$ by $G$ when it does not create confusion. We define: 
	\eqs{ {\cal G}(s) = \expec{ \exp( - s G) },}
	the Fourier/Laplace transform of $G$. We will sometimes consider the path-loss to follow a power law (assumption~\ref{a:power_law_path}).
	\begin{assumptions}[Power law path-loss]\label{a:power_law_path}
	 $G \equiv 1$ and the path-loss is $h(r) = P r^{-\eta}$ with $\eta > 2$ and $P > 0$.
\end{assumptions}
	Finally we define the \emph{load} of the typical cell:
	\eq{ \label{eq:load}
	\rho_0 = \int_{{\cal C}(0)} f(z,{\cal I}(z)) dz, 
	}
	with $f:M \times \RR^+ \to \RR^+$ a positive measurable function. The quantity $f(z,{\cal I}(z)) dz$ denotes the infinitesimal load created by users whom enter the network at location $z$. \emph{The goal of this work is to study the Palm distribution of the load of the typical cell.} We consider two possible simplifying assumptions on $f$.
\begin{assumptions}[No interference]\label{a:no interf}
There is a positive function $f_0:\RR^+ \to \RR^+$ such that for all $z$: \eqs{ f( z, {\cal I}(z)) = f_0( \norm{z}).}
\end{assumptions}
\begin{assumptions}[Affine function of interference]\label{a:linear interf}
	There are positive functions $f_0:\RR^+ \to \RR^+, f_1:\RR^+ \to \RR^+,$ such that for all $z$:
	\eqs{ f( z, {\cal I}(z)) = f_0(\norm{z}) + f_1(\norm{z}) {\cal I}(z).}
\end{assumptions}
		The Palm distribution of the load is linked to the stationary distribution of the load by Slivnyak's \emph{inversion formula}:
 \eqs{ \expec{ F( \rho_0 ) } = \frac{ \expecp{  F(\rho_0) \abs{{\cal C}(0)}} }{ \expecp{\abs{{\cal C}(0)}}} =  \lambda  \expecp{  F(\rho_0) \abs{{\cal C}(0)}}, 
 }
	with $F$ any positive bounded function. We denote by ${\cal B}(z,\delta)$ the Euclidean ball centered at $z$ of radius $\delta$. For $N \geq 1$ and $(z_1,\cdots,z_N) \in M^N$ we define: \eqs{ {\cal B}_N(z_1,\cdots,z_N) = \cup_{i=1}^N {\cal B}(z_i, \norm{z_i}).} We denote by $B_N(z_1,\cdots,z_N)$ the Lebesgue measure of ${\cal B}_N(z_1,\cdots,z_N)$. 

	We will often consider the distribution of $\Phi$ conditional to the fact that a given set of points $\{ z_1,\cdots,z_N \}$ belong to the typical cell ${\cal C}(0)$. Conditional to the event $(z_1,\cdots,z_N) \in {\cal C}(0)^N$, under Palm probability, $\Phi$ is a \ac{PPP} on $M \setminus {\cal B}_N(z_1,\cdots,z_N)$ with intensity $\lambda$. Therefore, for $F$ a function of the point process $\Phi$ we define the expectation of $F$ conditional to the event that $\{ z_1,\cdots,z_N \}$ are in the typical cell:
	\als{ \expecp{ F(\Phi) | z_1,\cdots,z_N} &=  \expecp{ F(\Phi) | (z_1,\cdots,z_N) \in {\cal C}(0)^N} \sk 
	&=  \expecp{ F(\Phi \cap  ( \CC \setminus {\cal B}_N(z_1,\cdots,z_N) ) ) }. 	}

\section{Propagation, Data Rate and Traffic models}\label{sec:traffic_models}
	
	The \emph{load} as defined in equation~\eqref{eq:load} appears in several models for the performance evaluation of wireless networks with flow-level dynamics. Namely, users enter the network at random times and locations, they are connected to the closest \ac{BS} and leave the network after receiving service from the \ac{BS}. The time spent by a user in the network depends on:
\begin{itemize}
	\item his location with respect to the serving \ac{BS} because the power of the received signal decreases with distance,
	\item his location with respect to the other \acp{BS} because the \acp{BS} interfere with each other,
	\item the number of active users and their positions because of congestion. Namely the radio resources are shared between active users at each \ac{BS}.
\end{itemize}
	We describe a few queuing models used for different types of traffic. Users enter the network according to a homogeneous \ac{PPP} of intensity $\lambda_{us}$ (not to be confused with $\lambda$). We will denote by $R(z)$ the data rate of a user located at $z$ i.e his throughput when he is the only active user served by the closest \ac{BS}.
	\subsection{Traffic models}
	\subsubsection{Voice traffic} The available bandwidth is divided into $C$ circuits (sometimes called subcarriers). When a user enters the network he is assigned a circuit. If there are no circuits available the user is blocked and leaves the network. A user stays an exponential time with mean $1/\mu$ in the network. Therefore, each \ac{BS} is a M/M/C/C queue. The load is:
\eqs{ 	\rho_0 = \int_{{\cal C}(0)} (\lambda_{us}/\mu) dz = (\lambda_{us}/\mu) \abs{ {\cal C}(0)  }.}
	The performance indicator of interest is the blocking rate (the proportion of blocked users), which is a function of the load, and is given by the Erlang B formula (see for instance \cite{kleinrock1976queueing1}).
	\subsubsection{Streaming traffic} For streaming traffic, the amount of resources allocated to a user depends on his data rate. A user stays an exponential time with mean $1/\mu$ in the network. The bandwidth is divided into $C$ circuits of equal size. Users must achieve a minimal data rate $R_{min}$ and are allocated $ \ceil{ \frac{C R_{min}}{R(z)}} $ circuits. If a user arrives and there are not enough circuits to serve him he is blocked. The load is: \eqs{   \rho_0 = \int_{{\cal C}(0)} \frac{\lambda_{us}}{C \mu}  \ceil{ \frac{ C R_{min} }{R(z)} } \indic_{[0,C]}\left( \frac{ R_{min} }{R(z)}  \right) dz.}
	This model is known as \emph{Multi-rate Erlang model} and the blocking rate can be calculated rapidly using the \emph{Kaufman-Roberts} algorithm (\cite{Kaufman1981,Roberts1981}). The blocking rate is not always an increasing function of the load (\cite{BonaldNetworks}), but the load can serve as a first-order performance measure.
	\subsubsection{Adaptive streaming traffic} For adaptive streaming traffic, users can adjust their instantaneous throughput based on the available bandwidth. Video-on-demand services usually follow this model, and the throughput is adjusted by changing the level of video encoding. A user stays an exponential time with mean $1/\mu$ in the network. Each \ac{BS} can be modeled as a M/M/$\infty$ queue with load $\rho_0 = (\lambda_{us}/\mu) \abs{ {\cal C}(0)} $.	The number of active users in stationary state is a Poisson random variable with mean $\rho_0$. Assume that the available bandwidth is shared equally among active users. Denote by $N$ a Poisson random variable with mean $\rho_0$. Then the expected throughput of a user at $z$ in steady state is:
	\eqs{
	R(z) \expec{ \frac{1}{N} \Big| N > 0 } = R(z) \frac{1 - e^{-\rho_0}(1 + \rho_0 )}{\rho_0(1 - e^{-\rho_0})}.
	}
\subsubsection{Elastic traffic} For elastic traffic, the bandwidth is shared fairly between active users as for adaptive streaming. Users download a random amount of data with expectation $\sigma$. Each \ac{BS} is an M/G/1 processor sharing queue (\cite{kleinrock1976queueing1,Bonald-dimensioning}) with load:
\eq{ \label{eq:elastic_loads}
	\rho_0 = \sigma \lambda_{us} \int_{{\cal C}(0)} \frac{1}{R(z)} dz .
}
In particular both the mean number of active users and the blocking rate are functions of the load (\cite{Bonald-dimensioning}). For instance the mean number of active users in stationary regime is $\rho_0/(1 - \rho_0)$. It is noted that for elastic traffic, the time spent by a user is inversely proportional to his data rate, so that cell edge users stay longer than cell center users.	
	\subsection{Data rate calculation}\label{subsec:influence_interference}
	\subsubsection{Spectral efficiency}
	We define the \ac{SINR} at location $z$:
	\eqs{ 
	S(z) = \frac{ h(z) }{ N_0 +  {\cal I}(z) },
	}
	with $N_0$ the thermal noise power. The user data rate at a given location depends on the interference because it is a function of the \ac{SINR}. Assuming that the channel between a \ac{BS} and a user is \ac{AWGN}, the data rate is given by the Shannon formula:
	\eq{ \label{eq:elastic_loads_data_rate}
	R(z) = w \log_2\left( 1 + S(z) \right), }
	with $w$ the bandwidth used for data transmission. As shown in \cite{MogensenLteCapacity} a modified Shannon formula $R(z) = \overline{w} \log_2(1 + S(z)/\overline{S})$ with $\overline{S} = 1.25$ and $\overline{w} = 0.75 w$ provides a very good approximation to the practical performance of LTE.
	
	For wide-band systems such as \ac{CDMA} where $w$ is large and the \ac{SINR} is low, the data rate can be well approximated by a \emph{linear} function of the \ac{SINR}:\eqs{
		R(z) \approx  w S(z) / \log(2), } and in this case, $1/R(z)$ is an \emph{affine} function of the interference ${\cal I}(z)$.
	\subsubsection{Path-loss, shadowing and fading}
	Distance-dependent path-loss is captured by the function $h$. Shadowing is captured by $G$, and typically $G$ can be taken as a log-normal variable. We can also account for Rayleigh fading of the useful signal by considering (\cite{Karray2009}):
	\als{ 
	R(z) &= w  \int_{0}^{+\infty} \log_2 \left( 1 + \xi S(z) \right) e^{-\xi} d\xi \sk
	&= w e^{1/S(z)} E_1(1/S(z))/\log(2), 
	}
	with $E_1(.)$ the exponential-integral function. The fast-fading of interfering signals has a relatively small impact because there is typically a large number of interfering \acp{BS} with independent fading processes (see for instance~\cite{CombesPeva2011}). 
	
	Similarly, \ac{MIMO} can be taken into account by changing the function mapping \ac{SINR} into the corresponding data rate.
	\subsubsection{Channel-aware scheduling}
	In downlink cellular networks serving elastic traffic, channel measurements are available at the \ac{BS} so that a scheduler picks the user with the best channel condition to transmit e.g a proportional fair scheduler. We can account for channel-aware scheduling in queuing models (e.g \cite{Bonald-dimensioning,Borst2005}). Define $R(z)$ the data rate of a user when his fading is maximal. In practice $R(z)$ is the data rate provided by the highest modulation and coding scheme. Then, for elastic traffic we can define the load as equation~\eqref{eq:elastic_loads}, and the stability condition is $\rho_0 < 1$. Each \ac{BS} can be represented by a M/G/1/PS queue with state-dependent service rate.
	
	\subsubsection{Frequency reuse schemes}\label{subsec:sfr}
	In interference limited networks, it can be beneficial that \acp{BS} do not transmit on the whole bandwidth at full power. We consider the following \emph{soft reuse} scheme. The bandwidth is split into $b > 1$ sub-bands of size $w/b$. Based on a threshold $r_{edge} \geq 0$ on the distance to the serving \ac{BS}, users are split into far users (``edge users'') and close users (``center users''). Each \ac{BS} chooses one of the sub-bands at random on which it transmits at full power and serves edge users. On the other sub-bands, the \ac{BS} serves center users and transmits at reduced power. The ratio between the reduced power and the maximal power is denoted $\kappa$. It was shown in~\cite{BonaldHegdeFrequencyReuse} that soft reuse noticeably increases the network capacity for elastic traffic. 

	Since choice of sub-bands by different \acp{BS} are independent, the interference received by a user is a shot noise with:
\eqs{
	\proba{ G = 1 } = 1/b  \;\;,\;\;  \proba{ G = \kappa } =	1 - 1/b.
}
By a thinning argument, the interference is also the sum of two shot noises generated by two \emph{independent} \acp{PPP} with intensity $\lambda/b$ and $\lambda(b-1)/b$ respectively.
	
	A \ac{BS} can be represented by two M/G/1/PS queues, one for the edge users and one for the center users with respective loads $\rho_0^{edge}$ and $\rho_0^{center}$:
	\als{
	 \rho_0^{edge} &= \sigma \lambda_{us} \int_{{\cal C}(0)} \frac{1}{R^{edge}(z) } \indic_{ [r_{edge},+\infty)}(\abs{z}) dz \sk
	 \rho_0^{center} &= \sigma \lambda_{us} \int_{{\cal C}(0)} \frac{1}{R^{center}(z)} \indic_{ [0,r_{edge}) }(\abs{z}) dz.
	}
The data rates are calculated by:
	\als{
	 R^{edge}(z) &=  (w/b) \log_2\left( 1 + S(z)  \right), \sk
	 R^{center}(z) &=  (w(b-1)/b) \log_2\left( 1 + \kappa S(z) \right). 
	} 
 We also consider a scheme called \emph{hard reuse} where a \ac{BS} chooses one of the sub-bands on which it transmits at full power and does not transmit on the others. This scheme can be seen as a particular case of soft reuse with $\kappa = 0$ and $r_{edge}=0$. Namely all users are considered as edge users.

\subsubsection{BS activity patterns}\label{subsec:sleep}

	In a realistic setting, \acp{BS} only transmit when they have users to serve, so that the inter-cell interference they create depends on their state. Namely neighboring \acp{BS} can be represented by coupled queues. For more than two \acp{BS}, this model is known to be intractable. ~\cite{Bonald04wirelessdata} proposes a ``fluid approximation'': \acp{BS} are modeled by independent queues, but the amount of time they transmit is equal to their load. The loads are obtained as a solution to a fixed point equation. While this is tractable for regular networks, in Poisson networks, even the expected load seems difficult to obtain, because the load of a~\ac{BS} depends on the load of its neighbors.
	
	We suggest to use the (tractable) approximation introduced by \cite{Dhillon2013}: the \acp{BS} are modeled by independent queues, and they are active with probability $p = \min( \expecp{\rho_0} , 1 )$. This is an independent thinning, so that the interference received at location $z$ is a shot noise generated by a \ac{PPP} with intensity $p \lambda$. The value of $\expecp{\rho_0}$ can then be found as a solution to a fixed point equation. 
	
	Furthermore, this issue is only significant in low loads, when the traffic demand is small compared to the network capacity. Hence, in operational networks, it should not be a major issue if we assume proper network dimensionning. 
	
	Also, the ``sleep mode'' functionality studied in~\cite{Tsilimantos2013} can be modeled the same way: \acp{BS} are switched off with probability $p$ resulting in an independent thinning.

\section{Line networks}\label{sec:line_networks}

	We first analyze line networks ($M = \RR$). Line networks are interesting because we can obtain the full distribution of the load in some cases. In this section $z \in \RR$ and $r \in \RR^+$.
\subsection{Typical cell}
	Under Palm probability, there is a \ac{BS} at $0$, that is $x_0 = 0$. We define $x_{l} = -x_{-1}$  and $x_{r} = x_{1}$, which are the neighbors of the central \ac{BS} on the left and right respectively. The typical cell is:	
	\eqs{ {\cal C} (0) = [-x_l/2,x_r/2]. } 
	The cell load is:
	\eq{\label{eq:line_load} \rho_0 = \int_{-x_l/2}^{x_r/2} f(z,{\cal I}(z))dz.}
	The geometry of the typical cell follows from the definition of a \ac{PPP}.
\begin{proposition}\label{prop:geometry_line_cell}
	$(x_l,x_r)$ are independent and both follow an exponential distribution with parameter $\lambda$.
\end{proposition}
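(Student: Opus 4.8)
The plan is to reduce the Palm-distributed configuration to an ordinary \ac{PPP} via Slivnyak's theorem, and then to read off the two marginals and their independence directly from the standard Poisson axioms. Under Palm probability there is a point of $\Phi$ at the origin, namely $x_0 = 0$; by Slivnyak's theorem the remaining points $\Phi \setminus \Set{0}$ form an ordinary \ac{PPP} on $\RR$ with intensity $\lambda$. The two quantities of interest are then $x_r = x_1$, the position of the nearest point of $\Phi \setminus \Set{0}$ in $(0,+\infty)$, and $x_l = -x_{-1}$, the distance to the nearest point in $(-\infty,0)$.

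First I would compute the marginal law of $x_r$ through the void probability of the \ac{PPP}. For $t \geq 0$, the event $\Set{x_r > t}$ coincides with the event that $\Phi \setminus \Set{0}$ places no point in the interval $(0,t]$. Since the number of points of a \ac{PPP} in an interval of length $t$ is Poisson distributed with mean $\lambda t$, this void probability equals $e^{-\lambda t}$, so $x_r$ is exponential with parameter $\lambda$. The same argument applied to the interval $[-t,0)$ shows that $x_l$ is likewise exponential with parameter $\lambda$.

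For the independence I would invoke the defining property of a \ac{PPP} that its restrictions to disjoint Borel sets are independent. The variable $x_r$ is a measurable function of the points of $\Phi \setminus \Set{0}$ lying in $(0,+\infty)$ only, whereas $x_l$ is a function of the points in $(-\infty,0)$ only. As these two half-lines are disjoint, the corresponding restricted processes are independent, hence so are $x_r$ and $x_l$. Combining the two steps yields the claimed joint law.

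I do not expect a serious obstacle here, since the statement is essentially the elementary fact that the nearest-neighbour gaps of a one-dimensional \ac{PPP} are independent exponentials. The only step requiring genuine care is the appeal to Slivnyak's theorem to strip off the conditioning on the atom at the origin, so that the surrounding configuration is that of an unconditioned \ac{PPP}; once that reduction is justified, both the marginal computations and the independence follow immediately.
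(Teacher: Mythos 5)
Your proof is correct and takes essentially the same route as the paper: both reduce, via Slivnyak's theorem, to void probabilities of an ordinary \ac{PPP} on $\RR$ under the Palm measure. The paper merely compresses your two steps into one, computing the joint survival function $\probap{ x_l > X_l , x_r > X_r } = \probap{ \Phi \cap [-X_l,X_r] = \Set{0} } = e^{-\lambda(X_l+X_r)}$ and reading off both the exponential marginals and the independence from its factorization, where you obtain the marginals from void probabilities of $(0,t]$ and $[-t,0)$ and then invoke independence of the process on the two disjoint half-lines.
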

\begin{proof}
	By definition of the \ac{PPP} , for $(X_l,X_r) \in (\RR^+)^2$ :
	\als{
	\probap{ x_l > X_l , x_r > X_r  } &= \probap{ \Phi \cap [-X_l,X_r] = \Set{0} } \sk
	&= \exp( - \lambda ( X_l + X_r )  ) \sk
	&= \exp( - \lambda X_l ) \exp( - \lambda X_r).
	}
\end{proof}
	This is the main reason why the distribution of the loads can be derived in some cases: the geometry of typical cell is described by two independent exponential variables.
\subsection{No interference}
	In this subsection we consider assumptions~\ref{a:no interf}, so that the load does not depend on interference. We derive its Laplace transform in Theorem~\ref{th:load_line_no_interf}. The proof of Theorem~\ref{th:load_line_no_interf} is given in appendix~\ref{subsec:proof_th_load_line_no_interf}. We define the auxiliary function $F_0$:
\eq{\label{eq:line_load_distance} F_0(r) = \int_0^r f_0(z)dz. }	
\begin{theorem}\label{th:load_line_no_interf}
	With assumptions~\ref{a:no interf}, the Laplace transform of the load is:
	\eqs{ \expecp{ \exp( -s \rho_0 )} = \left(  \lambda \int_{0}^{+\infty} \exp( -s F_0(r/2) - r \lambda) dr  \right)^2 }	
\end{theorem}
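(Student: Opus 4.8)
The plan is to reduce the load to a sum of two independent contributions, one from each half of the typical cell, and then exploit the independence granted by Proposition~\ref{prop:geometry_line_cell}. First I would invoke assumption~\ref{a:no interf} to remove the interference, so that $f(z,{\cal I}(z)) = f_0(\abs{z})$ and the load \eqref{eq:line_load} becomes
\eqs{ \rho_0 = \int_{-x_l/2}^{x_r/2} f_0(\abs{z})\, dz. }
Since $f_0$ depends only on $\abs{z}$ and the cell is an interval containing the origin, I would split this integral at $z=0$ and change variables $z \mapsto -z$ on the left half. Using the definition \eqref{eq:line_load_distance} of $F_0$, this turns the load into
\eqs{ \rho_0 = \int_0^{x_l/2} f_0(z)\, dz + \int_0^{x_r/2} f_0(z)\, dz = F_0(x_l/2) + F_0(x_r/2). }
This decomposition is the crux of the argument: the load is an \emph{additive} function of the two neighbour distances considered separately.

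Next I would factorize the Laplace transform. Because $\rho_0$ is the sum of a function of $x_l$ and a function of $x_r$, the random variable $\exp(-s\rho_0)$ splits as the product $\exp(-sF_0(x_l/2))\,\exp(-sF_0(x_r/2))$. Proposition~\ref{prop:geometry_line_cell} guarantees that $x_l$ and $x_r$ are independent, so the Palm expectation of this product factors into a product of two expectations. Since the same proposition also states that $x_l$ and $x_r$ share the exponential distribution with parameter $\lambda$, the two factors are equal, which produces the square in the claimed formula.

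Finally I would evaluate a single factor by integrating against the exponential density $\lambda e^{-\lambda r}$ on $\RR^+$, obtaining
\eqs{ \expecp{\exp(-s F_0(x_r/2))} = \lambda \int_{0}^{+\infty} \exp(-s F_0(r/2) - \lambda r)\, dr, }
and squaring then yields the stated result. I do not expect a genuine obstacle in this argument; the only points demanding care are the clean splitting of the cell at the origin and the recognition that the two half-cell contributions are driven by genuinely independent randomness, which is precisely what makes the transform collapse to a square rather than a single integral.
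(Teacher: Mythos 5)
Your proposal is correct and follows essentially the same route as the paper's own proof: the decomposition $\rho_0 = F_0(x_l/2) + F_0(x_r/2)$ via the symmetry of $f_0$, factorization of the Laplace transform using the independence and identical exponential distribution of $(x_l,x_r)$ from Proposition~\ref{prop:geometry_line_cell}, and evaluation of one factor against the density $\lambda e^{-\lambda r}$. Nothing is missing; the only difference is that you spell out the change of variables $z \mapsto -z$ on the left half-cell, which the paper compresses into the phrase ``the symmetry of $f_0$.''
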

\begin{corollary}
	\begin{itemize}
	\item (i) If $f_0(z) = 1$, $\rho_0$ follows an Erlang distribution with parameters $(2,\lambda)$.
	\item (ii) If $f_0(z) = \abs{z}^\alpha$, $\alpha \geq 0$ , $\rho_0$ is distributed as the sum of two independent Weibull \acp{r.v.} with parameters $(  4\lambda)^{-\alpha-1}(\alpha+1)^{-1},  ( \alpha + 1)^{-1} )$.
	\end{itemize}
\end{corollary}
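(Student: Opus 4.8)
The plan is to obtain both parts as direct specializations of the Laplace-transform formula in Theorem~\ref{th:load_line_no_interf}. The key structural observation is that the squared form of that transform reflects the decomposition $\rho_0 = F_0(x_l/2) + F_0(x_r/2)$ arising from $\rho_0 = \int_{-x_l/2}^{x_r/2} f_0(|z|)\,dz$, where $x_l,x_r$ are independent $\mathrm{Exp}(\lambda)$ variables by Proposition~\ref{prop:geometry_line_cell}. Hence $\rho_0$ is a sum of two i.i.d.\ copies of the single-sided contribution $Y = F_0(x/2)$ with $x\sim\mathrm{Exp}(\lambda)$, and each factor $\lambda\int_0^\infty \exp(-sF_0(r/2) - \lambda r)\,dr$ is precisely $\expecp{\exp(-sY)}$. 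So it suffices to identify the law of $Y$ for each choice of $f_0$.

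For (i), substituting $f_0 \equiv 1$ gives $F_0(r) = r$, so the single-sided transform is $\lambda \int_0^\infty \exp(-(s/2 + \lambda) r)\,dr = \lambda/(s/2+\lambda) = 2\lambda/(s + 2\lambda)$. Squaring yields $\left(2\lambda/(s+2\lambda)\right)^2$, the Laplace transform of a Gamma (Erlang) law of shape $2$; equivalently $\rho_0 = (x_l + x_r)/2$ is half the sum of two independent exponentials, hence Erlang of shape $2$. The only care needed here is to track how the rescaling by $1/2$ acts on the rate parameter.

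For (ii), substituting $f_0(z) = |z|^\alpha$ gives $F_0(r) = r^{\alpha+1}/(\alpha+1)$ and $F_0(r/2) = (r/2)^{\alpha+1}/(\alpha+1)$. Rather than evaluate the resulting integral, I would compute the law of $Y = (x/2)^{\alpha+1}/(\alpha+1)$ directly: since $r \mapsto F_0(r/2)$ is increasing, inverting it gives, for $y \ge 0$,
\[
\probap{Y \le y} = 1 - \exp\left(-2\lambda\,(\alpha+1)^{1/(\alpha+1)}\, y^{1/(\alpha+1)}\right),
\]
which is a Weibull c.d.f.\ with shape $(\alpha+1)^{-1}$ and the indicated scale. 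Then $\rho_0$ is the sum of two independent such Weibull variables. As a consistency check, $\alpha = 0$ yields shape $1$, i.e.\ an exponential, and the sum of two exponentials recovers part (i).

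The main obstacle is bookkeeping rather than anything deep: matching the derived c.d.f.\ to the paper's Weibull parameterization so that the scale constant comes out correctly, since the factor introduced by the $1/2$ inside $F_0(\cdot/2)$ interacts with the $(\alpha+1)$-th root and must be handled carefully. It is also worth emphasizing that a closed form for the sum of two independent Weibull variables is not available in general, so the answer is genuinely stated as the \emph{law of the sum} rather than a single named distribution.
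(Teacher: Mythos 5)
Your argument is correct and is essentially the paper's own proof: the paper likewise uses the decomposition $\rho_0 = F_0(x_l/2)+F_0(x_r/2)$ with $x_l,x_r$ i.i.d.\ exponential of rate $\lambda$ (Proposition~\ref{prop:geometry_line_cell}), obtains $\rho_0=(x_l+x_r)/2$ in case (i), and in case (ii) identifies the power of an exponential variable as a Weibull variable; your c.d.f.\ inversion is just an explicit form of that identification (the paper does not even need Theorem~\ref{th:load_line_no_interf} for this, and neither do you, really), and your $\alpha=0$ consistency check is a nice addition.

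One concrete caution: the single step you left to "bookkeeping" --- matching your c.d.f.\ to "the indicated scale" --- is precisely where the match fails under the standard Weibull parametrization. Your (correct) survival function $\exp\bigl(-2\lambda(\alpha+1)^{1/(\alpha+1)}\,y^{1/(\alpha+1)}\bigr)$ corresponds to scale $(2\lambda)^{-(\alpha+1)}(\alpha+1)^{-1}$ and shape $(\alpha+1)^{-1}$, whereas the statement asserts scale $(4\lambda)^{-\alpha-1}(\alpha+1)^{-1}$, i.e.\ an extra factor $2^{-(\alpha+1)}$, as if $x_l/2$ had rate $4\lambda$ instead of $2\lambda$. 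The paper's own parameter bookkeeping is not self-consistent on this point: in part (i) its proof produces the sum of two rate-$2\lambda$ exponentials (hence an Erlang of shape $2$ and rate $2\lambda$) while the statement reads $(2,\lambda)$, and at $\alpha=0$ the stated Weibull scale $(4\lambda)^{-1}$ with shape $1$ would make each summand exponential of rate $4\lambda$, contradicting both. So instead of asserting agreement with the stated constant, you should write out your derived scale and flag the discrepancy (most plausibly typos in the paper's constants, or an unstated nonstandard parametrization); the structural conclusion --- $\rho_0$ is the sum of two i.i.d.\ Weibull variables of shape $(\alpha+1)^{-1}$ --- is sound, as is the rest of your proof.
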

\begin{proof}
	(i) If $f_0(z) = 1$, then $F_0(r) = r$ so that 
\eqs{
	\rho_0 = \frac{x_l + x_r}{2}.
}
	From proposition~\ref{prop:geometry_line_cell}, $\rho_0$ is the sum of two independent exponentially distributed \acp{r.v.} with  parameter $2 \lambda$. Hence $\rho_0$ has an Erlang distribution with parameters $(2,\lambda)$.

	(ii) If $f_0(z) = \abs{z}^\alpha$, $\alpha \geq 0$, then $F_0(r) = \frac{r^{\alpha+1}}{\alpha+1}$ so that:
	\eqs{
	\rho_0 = \frac{(x_l/2)^{\alpha+1} + (x_r/2)^{\alpha+1}}{\alpha+1}.
	}
	$x_l/2$ is exponentially distributed with parameter $2 \lambda$, so that $(x_l/2)^{\alpha+1}$ follows a Weibull distribution with parameter $(4 \lambda)^{-\alpha-1}(\alpha+1)^{-1},  ( \alpha + 1)^{-1} )$. Hence $\rho_0$ is indeed a sum of two \ac{i.i.d.} Weibull \acp{r.v.} with parameters $( 4\lambda)^{-\alpha-1}(\alpha+1)^{-1},  ( \alpha + 1)^{-1} )$.
\end{proof}
	We observe that the cell size follows a Gamma distribution (Erlang is a particular case of Gamma). The Weibull distribution was introduced in \cite{Weibull1951}. The sum of \ac{i.i.d.} Weibull \acp{r.v.} was studied in \cite{Yilmaz2009}.
\subsection{Affine function of interference}\label{subsec:linear_interference_line}
	In this subsection we consider assumptions~\ref{a:linear interf}, so that the load is an affine function of interference. This is of interest for instance in wide-band systems serving elastic traffic as mentioned in subsection~\ref{subsec:influence_interference}. In this case as well we can derive the full distribution of the \ac{BS} load. 

	The \ac{BS} load can be written:
	\eq{\label{eq:load_line_linear_interf}
	 \rho_0 =  F_0(x_l/2) + F_0(x_r/2) +  \int_{-x_l/2}^{x_r/2} f_1(z) {\cal I}(z) dz.
	}
	
	We define two auxiliary functions ${\cal L}$ and $ {\cal K}$:
	\als{ {\cal L}(s, x_l , x_r) &= \expecp{ \exp \left\{ -s \int_{-x_l/2}^{x_r/2} f_1(z) {\cal I}(z) dz \right\} | x_l , x_r  } \sk
	&= {\cal G}( s {\cal K}(-x_l,x_l,x_r) ) {\cal G}( s {\cal K} (x_r,x_l,x_r) ) \sk 
	& \exp( -\lambda \int_{ \RR \setminus [-x_l,x_r] } (1 -  {\cal G}( s {\cal K} (z,x_l,x_r) ) ) dz ), }
	and:
	\eqs{ 
	\label{eq:th_1_def_k} {\cal K}(u,x_l,x_r) =  \int_{-x_l/2}^{x_r/2} f_1(z) h(\abs{z-u}) dz. 
	}
	The Laplace transform of the load is given by Theorem~\ref{th:linear_interf_plane}. The proof of Theorem~\ref{th:linear_interf_plane} is given in appendix~\ref{subsec:proof_th_linear}.
\begin{theorem}\label{th:linear_interf_plane}
	With assumptions~\ref{a:linear interf}, the Laplace transform of the \ac{BS} load is given by:
	\als{ & \expecp{ \exp( -s \rho_0) } = 2 \lambda^2 \int_{0}^{+\infty} \int_{0}^{r_2} {\cal L}(s,r_1 ,r_2) \sk 
	&\exp \left\{ -\lambda ( r_1  + r_2 ) - s ( F_0(r_1/2)  + F_0(r_2/2)) \right\} d r_1 d r_2.	}
\end{theorem}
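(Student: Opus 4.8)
The plan is to condition on the neighbor positions $(x_l,x_r)$ and exploit the fact, established in Proposition~\ref{prop:geometry_line_cell}, that they are independent $\text{exponential}(\lambda)$ variables. Once $(x_l,x_r)$ are fixed, the first two terms $F_0(x_l/2)+F_0(x_r/2)$ in the decomposition~\eqref{eq:load_line_linear_interf} are deterministic and pull out of the conditional expectation, so that the only randomness left to average is the interference integral $\int_{-x_l/2}^{x_r/2} f_1(z){\cal I}(z)\,dz$, whose conditional Laplace transform is by definition the auxiliary function ${\cal L}(s,x_l,x_r)$.

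First I would compute ${\cal L}(s,x_l,x_r)$. The key observation is that, conditional on $(x_l,x_r)$, the process $\Phi$ consists of the serving \ac{BS} at $0$, two deterministic atoms at $-x_l$ and $x_r$, and the remaining \acp{BS}, which form a \ac{PPP} of intensity $\lambda$ on $\RR\setminus[-x_l,x_r]$ (by the independence of a \ac{PPP} over disjoint regions, conditioning on the nearest-neighbor distances leaves the process beyond the neighbors untouched). Since the serving \ac{BS} at $0$ does not contribute to the inter-cell interference, swapping the $z$-integral with the sum over \acp{BS} (Fubini) splits $\int f_1(z){\cal I}(z)\,dz$ into three independent pieces: the left-neighbor term $G_{-1}{\cal K}(-x_l,x_l,x_r)$, the right-neighbor term $G_1{\cal K}(x_r,x_l,x_r)$, and the \ac{PPP} term $\sum_n G_n{\cal K}(x_n,x_l,x_r)$, where ${\cal K}(u,x_l,x_r)=\int_{-x_l/2}^{x_r/2} f_1(z)h(|z-u|)\,dz$. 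Taking expectations, the two neighbor terms yield the factors ${\cal G}(s{\cal K}(-x_l,x_l,x_r))$ and ${\cal G}(s{\cal K}(x_r,x_l,x_r))$ directly from the definition of ${\cal G}$, while the \ac{PPP} term yields $\exp\!\big(-\lambda\int_{\RR\setminus[-x_l,x_r]}(1-{\cal G}(s{\cal K}(z,x_l,x_r)))\,dz\big)$ by the shot-noise Laplace-functional formula recalled in appendix~\ref{sec:shotnoise}. This reproduces exactly the stated form of ${\cal L}$.

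Next I would uncondition. Using independence and the common $\text{exponential}(\lambda)$ density, the transform becomes the double integral over $(0,\infty)^2$ of $\lambda^2 e^{-\lambda(x_l+x_r)}\exp(-s(F_0(x_l/2)+F_0(x_r/2))){\cal L}(s,x_l,x_r)$. The last step is to fold this onto the region $r_1<r_2$ with a factor of $2$: since $f_1$ depends only on $|z|$ and is therefore even, the reflection $z\mapsto-z$ gives ${\cal K}(-u,x_r,x_l)={\cal K}(u,x_l,x_r)$, so ${\cal L}$ and the whole integrand are symmetric in $(x_l,x_r)$; restricting to $x_l<x_r$ and setting $(r_1,r_2)=(x_l,x_r)$ yields the claimed expression.

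I expect the main obstacle to be the rigorous justification of the conditional description of $\Phi$ given $(x_l,x_r)$, namely that the atoms at $-x_l,x_r$ are deterministic while the remainder is an undisturbed \ac{PPP} of intensity $\lambda$ on $\RR\setminus[-x_l,x_r]$; this is precisely what licenses the decomposition into three \emph{independent} contributions and hence the factorized form of ${\cal L}$. The remaining manipulations—Fubini to identify ${\cal K}$, the shot-noise Laplace functional, and the symmetrization—are routine.
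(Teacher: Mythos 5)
Your proposal is correct and follows essentially the same route as the paper's proof: conditioning on $(x_l,x_r)$, splitting the interference integral via Fubini into the two neighbor terms $G_{-1}{\cal K}(-x_l,x_l,x_r)$, $G_1{\cal K}(x_r,x_l,x_r)$ and the shot-noise term over the \ac{PPP} on $\RR\setminus[-x_l,x_r]$, applying proposition~\ref{prop:shot_noise_laplace}, unconditioning against the product of exponential densities, and symmetrizing onto $\{r_1<r_2\}$ with a factor of $2$. Your explicit verification of the symmetry of ${\cal L}$ via the reflection identity ${\cal K}(-u,x_r,x_l)={\cal K}(u,x_l,x_r)$ is in fact more detailed than the paper, which merely asserts the integrand is symmetric in $(r_1,r_2)$.
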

	The moments of the load can be recovered from the Laplace transform. Corollary~\ref{cor:linear_interf_plane1} gives  a fairly simple expression for the expected load.
\begin{corollary}\label{cor:linear_interf_plane1}
	With assumptions~\ref{a:linear interf}, the expectation of the \ac{BS} load is:
	\eqs{
	  \expecp{\rho_0} =  2 \int_{0}^{+\infty} ( f_0(r) + 2 \lambda \expec{G} H(r)  f_1(r) ) e^{- 2 \lambda r} dr,
	}
	with $H(r) = \int_{r}^{+\infty} h(u) du$.
\end{corollary}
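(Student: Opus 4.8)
The plan is to compute $\expecp{\rho_0}$ directly from the decomposition~\eqref{eq:load_line_linear_interf} by linearity of expectation, rather than differentiating the Laplace transform of Theorem~\ref{th:linear_interf_plane} at $s=0$ (which would also work, but forces one to differentiate the auxiliary function ${\cal L}$ and is far less transparent). Thus I would write $\expecp{\rho_0} = \expecp{F_0(x_l/2)} + \expecp{F_0(x_r/2)} + \expecp{\int_{-x_l/2}^{x_r/2} f_1(z){\cal I}(z)\,dz}$ and handle the distance terms and the interference term separately; the claimed formula is exactly the sum of their two contributions.

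For the distance terms, Proposition~\ref{prop:geometry_line_cell} gives that $x_l$ and $x_r$ are i.i.d.\ exponential with parameter $\lambda$, so by symmetry $\expecp{F_0(x_l/2)} = \expecp{F_0(x_r/2)} = \int_0^{+\infty} F_0(x/2)\,\lambda e^{-\lambda x}\,dx$. The substitution $r=x/2$ turns this into $2\lambda\int_0^{+\infty}F_0(r)e^{-2\lambda r}\,dr$, and a single integration by parts (using $F_0'=f_0$, $F_0(0)=0$, and vanishing of the boundary term at infinity) collapses it to $\int_0^{+\infty}f_0(r)e^{-2\lambda r}\,dr$. Doubling yields the contribution $2\int_0^{+\infty}f_0(r)e^{-2\lambda r}\,dr$, which is the $f_0$ part of the statement.

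For the interference term I would first invoke Tonelli (all integrands are nonnegative) to write it as $\int_{\RR}f_1(\abs{z})\,\expecp{{\cal I}(z)\indic_{\{z\in{\cal C}(0)\}}}\,dz$, reducing everything to the evaluation of $\expecp{{\cal I}(z)\indic_{\{z\in{\cal C}(0)\}}}$ for a fixed $z$, which I would factor as $\expecp{{\cal I}(z)\mid z\in{\cal C}(0)}\,\probap{z\in{\cal C}(0)}$. On the line ${\cal B}(z,\abs{z})$ is the interval of length $2\abs{z}$ with endpoints $0$ and $2z$, and $z\in{\cal C}(0)$ is precisely the event that $\Phi\setminus\Set{0}$ places no point there; by Slivnyak this event has probability $e^{-2\lambda\abs{z}}$. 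The conditioning rule recalled in Section~\ref{sec:model} then gives that, given $z\in{\cal C}(0)$, $\Phi$ is a \ac{PPP} of intensity $\lambda$ on $\RR\setminus{\cal B}(z,\abs{z})$, so Campbell's mean formula for shot noise (appendix~\ref{sec:shotnoise}) yields $\expecp{{\cal I}(z)\mid z\in{\cal C}(0)} = \lambda\expec{G}\int_{\RR\setminus{\cal B}(z,\abs{z})}h(\abs{z-x})\,dx$.

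The remaining one-dimensional integral is the crux: splitting $\RR\setminus{\cal B}(z,\abs{z})$ into the two rays flanking the forbidden interval and substituting $u=\abs{z-x}$ on each, both pieces equal $\int_{\abs{z}}^{+\infty}h(u)\,du=H(\abs{z})$, so the integral is $2H(\abs{z})$. Hence $\expecp{{\cal I}(z)\indic_{\{z\in{\cal C}(0)\}}}=2\lambda\expec{G}H(\abs{z})e^{-2\lambda\abs{z}}$, and using the $z\mapsto -z$ symmetry the interference term becomes $4\lambda\expec{G}\int_0^{+\infty}H(r)f_1(r)e^{-2\lambda r}\,dr$, which combines with the distance contribution to give the stated formula. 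I expect the interference term to be the main obstacle, and in particular the delicate point of keeping the serving \ac{BS} at $0$ out of ${\cal I}(z)$ — consistent with the absence of a ``$0$-term'' in ${\cal L}$ in Theorem~\ref{th:linear_interf_plane} — since applying the reduced-Palm conditioning to $\Phi$ rather than to $\Phi\setminus\Set{0}$ would produce a spurious extra term $\expec{G}\int_{\RR} f_1(\abs{z})h(\abs{z})e^{-2\lambda\abs{z}}\,dz$.
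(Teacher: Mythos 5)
Your proof is correct, and it takes a genuinely different route from the paper's. The paper offers no standalone argument for Corollary~\ref{cor:linear_interf_plane1}: it is presented as a consequence of Theorem~\ref{th:linear_interf_plane} (``the moments of the load can be recovered from the Laplace transform''), i.e.\ one differentiates $\expecp{e^{-s\rho_0}}$ at $s=0$, using ${\cal L}(0,x_l,x_r)=1$ and $-{\cal G}'(0)=\expec{G}$, which produces the mean interference through the kernel ${\cal K}$ conditioned on the neighbour positions $(x_l,x_r)$, and then unconditions against the density $\lambda^2 e^{-\lambda(r_1+r_2)}$. You instead compute $\expecp{\rho_0}$ directly by linearity, and every step checks out: the decomposition~\eqref{eq:load_line_linear_interf}, the exponential law of $x_l,x_r$ from Proposition~\ref{prop:geometry_line_cell} for the $f_0$ part, and for the interference part Tonelli, the void probability $\probap{z\in{\cal C}(0)}=e^{-2\lambda\abs{z}}$ (on the line ${\cal B}(z,\abs{z})$ is the interval of length $2\abs{z}$ between $0$ and $2z$), and Campbell's formula on $\RR\setminus{\cal B}(z,\abs{z})$, whose two rays each contribute $H(\abs{z})$ after the substitution $u=\abs{z-x}$. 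Your route buys two things: it never conditions on the exact neighbour positions --- the single void event suffices, which is why no ${\cal K}$-type kernel appears --- and, since all integrands are nonnegative, it remains valid with no further justification even when $\expecp{\rho_0}=+\infty$, whereas the Laplace-transform route requires justifying the differentiation under the expectation at $s=0^+$ (e.g.\ monotone convergence of $(1-e^{-s\rho_0})/s$). Your closing remark about keeping the serving \ac{BS} out of ${\cal I}(z)$ identifies the genuinely delicate point and matches the paper's own convention: the decomposition at the start of the proof of Theorem~\ref{th:linear_interf_plane} omits the $n=0$ term, and including it would add exactly the spurious term you display. One small nit: your integration by parts assumes the boundary term $F_0(r)e^{-2\lambda r}\to 0$ as $r\to+\infty$; this assumption can be dispensed with entirely by a Tonelli swap, $2\lambda\int_0^{+\infty}F_0(r)e^{-2\lambda r}\,dr=\int_0^{+\infty}f_0(u)\left(2\lambda\int_u^{+\infty}e^{-2\lambda r}\,dr\right)du=\int_0^{+\infty}f_0(u)e^{-2\lambda u}\,du$. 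As a final consistency check, with $h(r)=Pr^{-\eta}$ your formula gives $H(r)=Pr^{1-\eta}/(\eta-1)$ and recovers Corollary~\ref{cor:linear_interf_plane2} exactly.
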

	If we further consider power law path loss, the expected load has a simple form given by Corollary~\ref{cor:linear_interf_plane2}.
\begin{corollary}\label{cor:linear_interf_plane2}
	With assumptions~\ref{a:power_law_path} and~\ref{a:linear interf}, the expectation of the \ac{BS} load is:
\eqs{
	  \expecp{\rho_0} =  2 \int_{0}^{+\infty} ( f_0(r) + 2 \lambda P r^{1 - \eta} f_1(r) / (\eta - 1)  )  e^{- 2 \lambda r} dr.
	}
\end{corollary}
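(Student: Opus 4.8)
The plan is to obtain Corollary~\ref{cor:linear_interf_plane2} as a direct specialization of Corollary~\ref{cor:linear_interf_plane1}. That corollary already expresses the expected load in terms of the mean mark $\expec{G}$ and the tail integral $H(r) = \int_r^{+\infty} h(u)\,du$, so under the additional Assumption~\ref{a:power_law_path} it suffices to evaluate these two ingredients in closed form and substitute them back into the integrand. No new probabilistic input is needed: all the stochastic-geometry content has already been discharged in the proof of Corollary~\ref{cor:linear_interf_plane1}.

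First I would use that Assumption~\ref{a:power_law_path} fixes $G \equiv 1$, so that $\expec{G} = 1$ and the mark drops out of the formula. Second, I would compute the tail integral for the power-law kernel $h(u) = P u^{-\eta}$:
\[ H(r) = \int_r^{+\infty} P u^{-\eta}\,du = \frac{P\, r^{1-\eta}}{\eta - 1}. \]
The only point requiring any attention is the convergence of this integral and the validity of the resulting antiderivative, which is precisely where the standing hypothesis on $\eta$ is used; in fact $\eta > 1$ already guarantees $H(r) < \infty$, and the stronger bound $\eta > 2$ in Assumption~\ref{a:power_law_path} is inherited from the conditions needed earlier to make the shot-noise and the load well defined. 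This finiteness is the single step that could be called the main obstacle, and it is a mild one.

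Finally, I would substitute $\expec{G} = 1$ together with the closed form of $H(r)$ into the integrand of Corollary~\ref{cor:linear_interf_plane1}, so that the interference term becomes
\[ 2\lambda \expec{G} H(r) f_1(r) = \frac{2\lambda P\, r^{1-\eta}}{\eta - 1}\, f_1(r), \]
while the $f_0(r)$ term is unchanged. Collecting the two contributions under the common factor $e^{-2\lambda r}$ reproduces the stated expression verbatim. Since no interchange of limits or auxiliary estimate is invoked beyond the finiteness of $H$, the argument is essentially a one-line computation, and I would present it as such rather than reproving the underlying Palm-calculus identity.
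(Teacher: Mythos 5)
Your proof is correct and matches the paper's intended argument exactly: the paper states Corollary~\ref{cor:linear_interf_plane2} without a separate proof precisely because it is the direct specialization of Corollary~\ref{cor:linear_interf_plane1} you carry out, substituting $\expec{G}=1$ and $H(r)=\int_r^{+\infty}Pu^{-\eta}\,du = P r^{1-\eta}/(\eta-1)$. Your side remark that $\eta>1$ already suffices for the finiteness of $H$, with $\eta>2$ inherited from the standing assumption, is also accurate.
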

\section{Plane networks}\label{sec:plane_networks}
	We now turn to plane networks so that $M = \RR^2$. Analysis for plane networks is more involved than for line networks because the geometry of the typical cell is more complex. In this section we derive the first and second moments of the cell loads.
\subsection{Moments of the cell load}	
	In this section we identify $\RR^2$ with the complex plane $\CC$ to simplify notation. $z , z^\prime$ are locations in the complex plane, with $z = r e^{i \theta}$ , $z = r^\prime e^{i \theta^\prime}$ their polar coordinates (with $i$ the imaginary unit) and $x, x^\prime$ are locations in the complex plane used as integration variables. It is noted that $B_1(z) = \pi \abs{z}^2$ and:\eqs{B_2(z,z^{\prime}) = \pi ( \abs{z}^2 + \abs{z^{\prime}}^2) - \abs{z}^2 A(z^{\prime}/z)}
with $A(z)$ the area of the intersection of the unit disk, and a disk of center $z-1$ and radius $|z|$.
	
	The $N$-th moment of the cell load can be calculated as a $2 N$ dimensional integral. The moments might be infinite.
\begin{theorem}\label{th:plane_load_moments}
	The $N$-th (Palm) moment of the load is:
	\al{
		\expecp{ \rho_0^N } &= \int_{ \CC^N}  \expecp{ \prod_{i=1}^N f(z_i, {\cal I}(z_i) ) | z_1,\cdots,z_N}  \sk 
		&\exp( - \lambda B_N(z_1,\cdots,z_N) ) dz_1 \cdots dz_N.\label{eq:th:plane_load_moments}
	}
\end{theorem}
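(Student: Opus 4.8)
The plan is to expand the $N$-th power of the load as a multiple integral over $\CC^N$, interchange the Palm expectation with the integral by positivity, and then evaluate the resulting integrand using the void probability of the \ac{PPP} together with the conditional description of $\Phi$ already recalled in Section~\ref{sec:model}. First I would rewrite, from the definition~\eqref{eq:load},
\eqs{
\rho_0^N = \int_{\CC^N} \indic_{\Set{(z_1,\cdots,z_N) \in {\cal C}(0)^N}} \prod_{i=1}^N f(z_i, {\cal I}(z_i)) \, dz_1 \cdots dz_N,
}
replacing the product of $N$ integrals over ${\cal C}(0)$ by a single integral over $\CC^N$ with an indicator enforcing that every $z_i$ lies in the typical cell. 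Since $f \geq 0$, the integrand is nonnegative, so Tonelli's theorem lets me swap expectation and integration:
\eqs{
\expecp{\rho_0^N} = \int_{\CC^N} \expecp{ \indic_{\Set{(z_1,\cdots,z_N) \in {\cal C}(0)^N}} \prod_{i=1}^N f(z_i, {\cal I}(z_i)) } dz_1 \cdots dz_N.
}

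Next I would evaluate the inner expectation by conditioning on the event $A = \Set{(z_1,\cdots,z_N) \in {\cal C}(0)^N}$, writing it as $\probap{A}$ times the conditional expectation $\expecp{\prod_{i=1}^N f(z_i, {\cal I}(z_i)) \mid z_1,\cdots,z_N}$ in the sense defined in Section~\ref{sec:model}. The crux is computing $\probap{A}$. Under Palm probability there is a point at the origin, and a location $z_i$ belongs to ${\cal C}(0)$ exactly when no point of $\Phi \setminus \Set{0}$ is closer to $z_i$ than the origin is, i.e. when $\Phi \setminus \Set{0}$ places no point in ${\cal B}(z_i, \norm{z_i})$. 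Requiring this simultaneously for all $i$ means $\Phi \setminus \Set{0}$ avoids ${\cal B}_N(z_1,\cdots,z_N) = \cup_{i=1}^N {\cal B}(z_i, \norm{z_i})$. By Slivnyak's theorem $\Phi \setminus \Set{0}$ is a \ac{PPP} of intensity $\lambda$, so the void-probability formula gives $\probap{A} = \exp(-\lambda B_N(z_1,\cdots,z_N))$. Substituting this back yields exactly~\eqref{eq:th:plane_load_moments}.

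The main obstacle I anticipate is not any single computation but the care needed at the conditioning step. I would have to justify that the factorization $\expecp{\indic_A X} = \probap{A}\,\expecp{X \mid A}$ is consistent with the stated conditional law of $\Phi$ — namely that, conditionally on $A$, $\Phi$ restricted to $\CC \setminus {\cal B}_N(z_1,\cdots,z_N)$ is again a \ac{PPP} of intensity $\lambda$, so that the conditional expectation of $\prod_i f(z_i,{\cal I}(z_i))$ matches the notation of Section~\ref{sec:model}. I would also verify the geometric equivalence between $A$ and the avoidance of ${\cal B}_N$; here the origin sits on the boundary of each ball ${\cal B}(z_i,\norm{z_i})$, but this boundary is a Lebesgue-null set and does not affect either the void probability or the integral. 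Finally, possible infiniteness of the moments needs no separate argument: all integrands are nonnegative, so Tonelli applies verbatim and both sides of~\eqref{eq:th:plane_load_moments} are well defined in $[0,+\infty]$.
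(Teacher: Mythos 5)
Your proposal is correct and follows essentially the same route as the paper's own proof: both expand $\rho_0^N$ as an integral over $\CC^N$ against the indicator of ${\cal C}(0)^N$, interchange the Palm expectation with the integral, identify $\probap{(z_1,\cdots,z_N) \in {\cal C}(0)^N} = \exp(-\lambda B_N(z_1,\cdots,z_N))$ as a void probability of the \ac{PPP} via Slivnyak's theorem, and conclude by conditioning on that event in the sense fixed in Section~\ref{sec:model}. Your additional remarks (Tonelli in $[0,+\infty]$, the origin lying on the null boundary of each ball ${\cal B}(z_i,\norm{z_i})$) merely make explicit what the paper leaves implicit.
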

The Laplace transform of the interference $({\cal I}(z_1), \cdots, {\cal I}(z_N))$ conditional to $(z_1,\cdots,z_N) \in {\cal C}(0)^N$ is given by proposition~\ref{prop:shot_noise_laplace} stated in the appendix. Also, the expectation on the \ac{r.h.s.} of~\eqref{eq:th:plane_load_moments} can be calculated using the Plancherel theorem (proposition~\ref{prop:parseval}).

	While it is theoretically possible to calculate all the moments of the load using Theorem~\ref{th:plane_load_moments}, the amount of computing power required increases exponentially with $N$. Hence its practical use is limited to the first and second moments. 
\begin{theorem}\label{th:plane_load_mean}
	(i) The first (Palm) moment of the load is:
		\eq{ \label{eq:plane_load_mean} \expecp{ \rho_0 } = 2 \pi \int_{0}^{+\infty} r K(r) e^{-\lambda \pi r^2} dr, }
	with: $K(z) = \expec{f(z, {\cal I}(z)) | z}$.
\end{theorem}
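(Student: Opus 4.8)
The plan is to obtain Theorem~\ref{th:plane_load_mean} as the special case $N=1$ of Theorem~\ref{th:plane_load_moments}, and then to exploit the rotational symmetry of the model to collapse the resulting two-dimensional integral to a one-dimensional one. Setting $N=1$ in~\eqref{eq:th:plane_load_moments} gives
\[ \expecp{ \rho_0 } = \int_{\CC} \expecp{ f(z, {\cal I}(z)) | z } \, e^{- \lambda B_1(z)} \, dz. \]
Recalling the identity $B_1(z) = \pi \abs{z}^2$ stated before the theorem, together with the definition $K(z) = \expec{ f(z, {\cal I}(z)) | z }$, this reads
\[ \expecp{ \rho_0 } = \int_{\CC} K(z) \, e^{- \lambda \pi \abs{z}^2} \, dz. \]

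First I would show that the integrand depends on $z$ only through $r = \abs{z}$. The factor $e^{-\lambda \pi \abs{z}^2}$ is manifestly radial, so it remains to treat $K$. Here I would invoke the isotropy of the \ac{PPP}: conditional on $z \in {\cal C}(0)$, the process $\Phi$ is a \ac{PPP} of intensity $\lambda$ on $\CC \setminus {\cal B}(z, \norm{z})$, and rotating $z$ to $e^{i\phi} z$ rotates both the excluded ball and the restricted process accordingly. Since the marks $G_n$ are \ac{i.i.d.} and independent of $\Phi$, the law of ${\cal I}(e^{i\phi} z)$ under the conditioning $e^{i\phi} z \in {\cal C}(0)$ coincides with that of ${\cal I}(z)$ under $z \in {\cal C}(0)$; as $f$ depends on its spatial argument only through $\norm{z}$, it follows that $K(e^{i\phi} z) = K(z)$ for all $\phi$, so $K$ is radial and we write $K(z) = K(r)$.

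Having reduced the integrand to a radial function, I would pass to polar coordinates $z = r e^{i\theta}$, for which $dz = r \, dr \, d\theta$, and integrate out $\theta$ over $[0, 2\pi)$. This produces the factor $2\pi$ and the claimed identity
\[ \expecp{ \rho_0 } = 2\pi \int_0^{+\infty} r K(r) \, e^{-\lambda \pi r^2} \, dr. \]

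The main obstacle is the justification of the radial reduction rather than any calculation: one must verify that the conditional law of ${\cal I}(z)$ given $z \in {\cal C}(0)$ genuinely depends on $z$ only through $\norm{z}$, which is precisely where the isotropy of the \ac{PPP} and the rotation-invariance of the ball ${\cal B}(z, \norm{z})$ enter. As a cross-check, a self-contained route avoiding Theorem~\ref{th:plane_load_moments} would start from $\rho_0 = \int_{\CC} \indic_{z \in {\cal C}(0)} f(z, {\cal I}(z)) \, dz$, apply Tonelli (all quantities being nonnegative) to interchange $\expecp{.}$ with the integral, and factor $\expecp{ \indic_{z \in {\cal C}(0)} f(z, {\cal I}(z)) } = \probap{ z \in {\cal C}(0) } K(z)$; the void probability $\probap{ z \in {\cal C}(0) } = e^{-\lambda \pi \norm{z}^2}$ follows from Slivnyak's theorem, since $z \in {\cal C}(0)$ is exactly the event that $\Phi$ has no point in ${\cal B}(z, \norm{z})$, recovering the same integrand.
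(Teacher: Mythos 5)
Your proof is correct and follows exactly the paper's own route: specialize Theorem~\ref{th:plane_load_moments} to $N=1$ with $B_1(z)=\pi\abs{z}^2$, observe that $K$ is radial by circular symmetry (a step the paper merely asserts and you rightly justify via isotropy of the \ac{PPP} and rotation-invariance of ${\cal B}(z,\norm{z})$, noting the implicit requirement that $f$ depend on its spatial argument only through $\norm{z}$), and pass to polar coordinates. Your self-contained cross-check via Tonelli and the void probability is simply the paper's proof of Theorem~\ref{th:plane_load_moments} unrolled at $N=1$, so nothing further is needed.
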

The Laplace transform of ${\cal I}(z)$, conditional to $z \in {\cal C}(0)$ is given by proposition~\ref{prop:shot_noise_plane}.
\begin{proof}
	(i) We have that $  B_1(z_1) = \pi \abs{z_1}^2$, so specializing Theorem~\ref{th:plane_load_moments} to $N = 1$ gives:
	\eq{ \label{eq:thm3eq1}
		\expecp{ \rho_0 } = \int_{\CC} K(z) e^{ - \lambda \pi \abs{z}^2 } dz.
	}
	By circular symmetry, we have that $K(z) = K(\abs{z})$ for all $z$. Rewriting the integral~\eqref{eq:thm3eq1} in polar coordinates proves the result:
	\eqs{
		\expecp{ \rho_0 } = 2 \pi  \int_{0}^{+\infty} r K(r) e^{ - \lambda \pi r^2 } dr
	}
\end{proof}
The second moment of the cell load is given by theorem~\ref{th:plane_load_var}, and its proof is presented in appendix~\ref{subsec:proof:plane_load_var}.
\begin{theorem}\label{th:plane_load_var}
	a) The second (Palm) moment of the load is:
	\als{  \expecp{ \rho_0^2 } = 8 \pi \int_{0}^{+\infty} r \Lp \int_{r}^{+\infty} r^\prime \Lp \int_{0}^{\pi}   Q(r,r^\prime, \theta) d \theta \Rp d r^\prime\Rp dr.}
with:
	\eqs{ Q(r,r^\prime, \theta) = L(r,r^\prime e^{i\theta}) \exp \Lp - \lambda r^2 B_2(1,e^{i\theta} r^\prime/r) \Rp,}
	and:
	\eqs{ L(z,z^\prime) = \expec{   f(z, {\cal I}(z)) f(z^\prime, {\cal I}(z^\prime)) | z,z^\prime}.}
	
\end{theorem}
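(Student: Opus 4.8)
\emph{Plan.} The plan is to start from the $N=2$ case of Theorem~\ref{th:plane_load_moments} and collapse the resulting four-dimensional integral over $\CC^2$ down to the three-dimensional integral in the statement, using nothing but the symmetries of the Poisson--Voronoi model and a single scaling identity for $B_2$. Specializing \eqref{eq:th:plane_load_moments} to $N=2$ and recalling the definition of $L$, the second moment is
\[
\expecp{\rho_0^2} = \int_{\CC^2} L(z_1,z_2)\, e^{-\lambda B_2(z_1,z_2)}\, dz_1\, dz_2 .
\]
Writing $g(z_1,z_2) = L(z_1,z_2)\, e^{-\lambda B_2(z_1,z_2)}$, I would pass to polar coordinates $z_1 = r_1 e^{i\theta_1}$, $z_2 = r_2 e^{i\theta_2}$, so that $dz_1\, dz_2 = r_1 r_2\, dr_1\, dr_2\, d\theta_1\, d\theta_2$. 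Since $g \geq 0$, Tonelli's theorem licenses every interchange and change of variables below with no integrability hypothesis, which is important because the moment may be infinite.

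The heart of the argument is that $g$ inherits the full isometry invariance of the model. Both $L$ and $B_2$ are unchanged under a common rotation or reflection of the pair $(z_1,z_2)$: the set ${\cal B}_2(z_1,z_2)$ is equivariant under isometries (a ball of centre $z$ and radius $\norm{z}$ maps to a ball of centre $Oz$ and radius $\norm{Oz}=\norm{z}$), so by the conditional description of $\Phi$ recalled in Section~\ref{sec:model} its law is a \ac{PPP} on the correspondingly transformed complement with the \emph{same} intensity $\lambda$; since ${\cal I}$ is built from distances and \ac{i.i.d.} marks, and $f,h$ depend on position only through the norm (as already used for $K(z)=K(\abs{z})$ in Theorem~\ref{th:plane_load_mean}), one gets $g(Oz_1,Oz_2)=g(z_1,z_2)$ for every orthogonal $O$. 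I would then exploit this in three steps. First, rotation invariance makes $g$ a function of $r_1,r_2$ and the angular difference $\theta=\theta_2-\theta_1$ only; integrating out the free angle contributes a factor $2\pi$ and lets me place $z_1=r_1$ on the real axis and $z_2 = r_2 e^{i\theta}$. Second, reflection across the real axis sends $e^{i\theta}\mapsto e^{-i\theta}$ while fixing $z_1$, so $g$ is even in $\theta$ and $\int_0^{2\pi} d\theta = 2\int_0^{\pi} d\theta$, a factor $2$. Third, combining the exchange symmetry $g(z_1,z_2)=g(z_2,z_1)$ with the two previous invariances gives $g(r_2,r_1 e^{i\theta})=g(r_1,r_2 e^{i\theta})$, so the integrand is symmetric under $r_1\leftrightarrow r_2$ and restricting to $r_1\le r_2$ yields a further factor $2$. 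Relabelling $r=r_1\le r_2=r'$ and collecting prefactors gives $2\pi\cdot 2\cdot 2 = 8\pi$.

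It remains to recognize $Q$. Since ${\cal B}_2(cz_1,cz_2)=c\,{\cal B}_2(z_1,z_2)$ for $c>0$, the planar measure obeys $B_2(z_1,z_2)=c^2 B_2(z_1/c,z_2/c)$; taking $c=r$ with $z_1=r$, $z_2=r'e^{i\theta}$ gives $B_2(r,r'e^{i\theta})=r^2 B_2(1,e^{i\theta}r'/r)$. Substituting into $g$ produces exactly $Q(r,r',\theta)=L(r,r'e^{i\theta})\exp(-\lambda r^2 B_2(1,e^{i\theta}r'/r))$, and hence
\[
\expecp{\rho_0^2} = 8\pi \int_0^{+\infty} r\left( \int_r^{+\infty} r'\left( \int_0^{\pi} Q(r,r',\theta)\, d\theta \right) dr' \right) dr .
\]

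\emph{Main obstacle.} The difficulty is not computational but lies in making the invariance $g(Oz_1,Oz_2)=g(z_1,z_2)$ rigorous over the \emph{full} orthogonal group, especially the reflection used in the second step: one must verify that the conditional interference vector $({\cal I}(z_1),{\cal I}(z_2))$ is genuinely equivariant under isometries, which rests on the equivariance of ${\cal B}_2$, the constancy of the intensity, and the circular symmetry of $f$ and $h$. Once this is in place, everything else is bookkeeping of the three symmetry factors and the one scaling identity.
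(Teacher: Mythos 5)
Your proposal is correct and follows essentially the same route as the paper's proof: specializing Theorem~\ref{th:plane_load_moments} to $N=2$, passing to polar coordinates, and reducing the four-fold integral via the exchange symmetry, rotation invariance, and the homogeneity $B_2(r,r'e^{i\theta})=r^2B_2(1,e^{i\theta}r'/r)$, which are exactly the three invariants the paper lists. You merely make explicit two points the paper glosses over — the reflection invariance needed to halve the $\theta$-range to $[0,\pi]$ jointly with the restriction $r\le r'$, and the Tonelli justification when the moment may be infinite — which is a welcome sharpening rather than a departure.
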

	The Laplace transform of $({\cal I}(z), {\cal I}(z^\prime))$  conditional to $(z,z^\prime) \in {\cal C}(0)^2$ is given by specialization of theorem~\ref{prop:shot_noise_laplace} to $N = 2$ and $\lambda(dx) = \lambda \indic_{ \CC \setminus {\cal B}_2(z,z^{\prime})}(x) dx$.

	Theorems~\ref{th:plane_load_mean} and~\ref{th:plane_load_var} are an extension of the results of \cite{Foss1996} to take into account the influence of interference. In order to compute $K$ and $L$, we must compute the \ac{p.d.f.} of the interference at various locations. 

\subsection{Affine function of interference}\label{subsec:linear_interference_plane}
 As done for plane networks in subsection~\ref{subsec:linear_interference_line}, we consider the case where the load at a location is an affine function of the interference (assumptions~\ref{a:linear interf}). In this case calculating the full distribution of the interference is not needed. Obtaining the first and second moments is sufficient which simplifies the calculations. 
\begin{corollary}\label{cor:plane_load_mean_linear}
	Under assumptions~\ref{a:linear interf}, the first (Palm) moment of the load is:
 \eqs{ 
 \expecp{ \rho_0 } = 2 \pi \int_{0}^{+\infty} r \Lp f_0(r) +  2 \pi \lambda \expec{G} H(r) f_1(r) \Rp e^{- \lambda \pi r^2} dr .
 }
 with $H(r) = \int_{r}^{+\infty} r h(r) dr$.
\end{corollary}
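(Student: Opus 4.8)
The plan is to invoke Theorem~\ref{th:plane_load_mean}, which has already reduced the first moment to a one-dimensional radial integral, $\expecp{\rho_0} = 2\pi \int_0^{+\infty} r K(r) e^{-\lambda \pi r^2}\, dr$ with $K(z) = \expec{f(z,{\cal I}(z)) \mid z}$. Everything therefore reduces to evaluating the conditional integrand $K$ under assumption~\ref{a:linear interf}. First I would substitute the affine form $f(z,{\cal I}(z)) = f_0(\norm{z}) + f_1(\norm{z}){\cal I}(z)$ and use linearity of the conditional expectation to split $K(z) = f_0(\norm{z}) + f_1(\norm{z})\, \expec{{\cal I}(z) \mid z}$. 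The only nontrivial object left is the conditional mean interference $\expec{{\cal I}(z)\mid z}$.

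To compute it I would use the conditioning rule recalled in section~\ref{sec:model}: conditional on $z \in {\cal C}(0)$, the interferers form a \ac{PPP} on $\CC \setminus {\cal B}(z,\norm{z})$ with intensity $\lambda$, independent of the marks $\Set{G_n}$. Applying Campbell's formula (equivalently, differentiating the conditional Laplace transform of proposition~\ref{prop:shot_noise_plane} at $s=0$) then gives
\eqs{ \expec{{\cal I}(z) \mid z} = \lambda\, \expec{G} \int_{\CC \setminus {\cal B}(z,\norm{z})} h(\norm{z - x})\, dx. }
Next I would evaluate this integral by the translation $y = x - z$, which maps $\CC \setminus {\cal B}(z,\norm{z})$ onto the complement of the disk of radius $\norm{z}$ centred at the origin; passing to polar coordinates turns it into $\int_{\norm{z}}^{+\infty} 2\pi u\, h(u)\, du = 2\pi H(\norm{z})$ with $H(r) = \int_{r}^{+\infty} u\, h(u)\, du$. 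This yields $\expec{{\cal I}(z)\mid z} = 2\pi \lambda \expec{G} H(\norm{z})$, which depends on $z$ only through $\norm{z}$, consistent with the circular symmetry already exploited in Theorem~\ref{th:plane_load_mean}. Substituting $K(r) = f_0(r) + 2\pi \lambda \expec{G} H(r) f_1(r)$ back into the radial integral gives the claimed formula.

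The computation is essentially routine; the one place demanding care is the region of integration. The key point is that conditioning on $z \in {\cal C}(0)$ removes all interferers from the disk ${\cal B}(z,\norm{z})$ (no \ac{BS} can lie closer to $z$ than the origin does), so a naive Campbell integral over all of $\CC$ would overcount and would produce a divergent contribution near $z$ whenever $h$ is singular at the origin. Once this exclusion region is correctly identified, the translation-and-polar-coordinates step and the final substitution are immediate, and the extra factor $u$ inside $H$ (absent from the line-network version in Corollary~\ref{cor:linear_interf_plane1}) arises precisely from the $2\pi u\, du$ area element in the plane.
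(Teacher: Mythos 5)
Your proof is correct and takes essentially the same route as the paper, which states this corollary without a written proof precisely because it follows by inserting the conditional mean interference $\expec{{\cal I}(z)\mid z} = 2\pi\lambda\,\expec{G}\int_{\abs{z}}^{+\infty} u\,h(u)\,du$ of Proposition~\ref{prop:shot_noise_plane} into Theorem~\ref{th:plane_load_mean}, exactly as you do. Your Campbell-formula computation over $\CC\setminus{\cal B}(z,\norm{z})$ (translation plus polar coordinates) is the same argument underlying that proposition, and you correctly read the paper's loosely written $H(r)=\int_{r}^{+\infty} r\,h(r)\,dr$ as $\int_{r}^{+\infty} u\,h(u)\,du$.
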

\begin{corollary}\label{cor:plane_load_mean_linear_path}
	Under assumptions~\ref{a:power_law_path} and~\ref{a:linear interf}, the first (Palm) moment of the load is:
	\eqs{ 
	 \expecp{ \rho_0 } = 2 \pi \int_{0}^{+\infty} r \Lp f_0(r) +  2 \pi \lambda P \frac{r^{2-\eta}}{\eta - 2} f_1(r) \Rp e^{ - \lambda \pi r^2 } dr .}
\end{corollary}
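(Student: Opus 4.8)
The plan is to derive this directly from Corollary~\ref{cor:plane_load_mean_linear}, which already furnishes the first moment of the load under the affine-interference assumption~\ref{a:linear interf} in terms of the auxiliary quantity $H(r) = \int_r^{+\infty} u\, h(u)\, du$. Since assumption~\ref{a:power_law_path} merely specializes the path-loss kernel $h$ and the mark $G$, the entire task reduces to evaluating $H$ for the power-law kernel and inserting the value $\expec{G} = 1$; no new geometric or probabilistic argument is required.

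First I would invoke assumption~\ref{a:power_law_path}: since $G \equiv 1$ we have $\expec{G} = 1$, and since $h(u) = P u^{-\eta}$ the integrand of $H$ becomes $u \cdot P u^{-\eta} = P u^{1-\eta}$. Hence
$$ H(r) = P \int_r^{+\infty} u^{1-\eta}\, du . $$
The exponent satisfies $1 - \eta < -1$ precisely because $\eta > 2$, so the integral converges at infinity and evaluates to $\left[ \tfrac{u^{2-\eta}}{2-\eta} \right]_r^{+\infty} = \tfrac{r^{2-\eta}}{\eta - 2}$, giving the closed form $H(r) = P\, \tfrac{r^{2-\eta}}{\eta - 2}$.

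Finally I would substitute $\expec{G} = 1$ together with this expression for $H(r)$ into the formula of Corollary~\ref{cor:plane_load_mean_linear}, which immediately yields the claimed identity. The only point that genuinely needs attention — and the reason assumption~\ref{a:power_law_path} imposes $\eta > 2$ — is the convergence of the improper integral defining $H(r)$: for $\eta \le 2$ the tail contribution $H(r)$ would diverge and the expected load would be infinite, so the restriction $\eta > 2$ is exactly what makes the statement well posed. Beyond this single observation the argument is an entirely routine specialization rather than a substantive proof.
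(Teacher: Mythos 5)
Your proposal is correct and matches the paper's (implicit) derivation: the corollary is stated without proof precisely because it is the routine specialization of Corollary~\ref{cor:plane_load_mean_linear} that you carry out, namely setting $\expec{G}=1$ and evaluating $H(r)=P\int_r^{+\infty}u^{1-\eta}\,du = P\,r^{2-\eta}/(\eta-2)$, which converges since $\eta>2$. Your closing remark about $\eta>2$ being exactly the condition making $H$ finite is also consistent with the paper's discussion (cf.\ Corollary~\ref{cor:th:plane_load_var}, where the expected load blows up as $\eta\to 2^{+}$).
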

	Under assumptions~\ref{a:power_law_path} and~\ref{a:linear interf}, the $L$ function introduced in Theorem~\ref{th:plane_load_var} simplifies similarly.
\subsection{Elastic traffic in the low SINR regime}\label{subsec:elastic_low_sinr}
	When considering elastic traffic in the low \ac{SINR} regime, the expected load has a remarkably simple expression derived in corollary~\ref{cor:th:plane_load_var}. The equation for the load (the function $f$) for elastic traffic is explained in section~\ref{sec:traffic_models}. The proof of corollary~\ref{cor:th:plane_load_var} is found in appendix~\ref{subsec:cor:th:plane_load_var}.
\begin{corollary}\label{cor:th:plane_load_var}
	Consider function $f$ given by equations~\eqref{eq:elastic_loads} and ~\eqref{eq:elastic_loads_data_rate}, and assumption~\ref{a:power_law_path}. Then we have the lower bound on the expected load:
	\eq{\label{eq:load_bound}
		\expecp{ \rho_0 } \geq \frac{ \log(2) \lambda_{us} \sigma  }{w \lambda} \Lp   \frac{N_0}{ P ( \sqrt{\pi \lambda} )^{\eta}} \Gamma( 1 + \eta/2 )  +  \frac{2}{\eta - 2} \Rp.
	}
	In particular:
	\eqs{
		\expecp{ \rho_0 } \tends{\eta}{2^+} +\infty.	}
\end{corollary}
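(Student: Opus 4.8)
The plan is to reduce the statement to the affine-interference case, for which Corollary~\ref{cor:plane_load_mean_linear_path} already provides a closed-form expected load. First I would substitute the elastic-traffic load~\eqref{eq:elastic_loads} together with the Shannon rate~\eqref{eq:elastic_loads_data_rate}, and use $\log_2(x)=\log(x)/\log(2)$, to write
\eqs{ \rho_0 = \frac{\log(2)\,\sigma\lambda_{us}}{w} \int_{{\cal C}(0)} \frac{1}{\log(1+S(z))}\,dz. }
The obstacle is that $1/\log(1+S(z))$ is neither affine nor even a tractable function of the interference ${\cal I}(z)$, so none of the moment formulas of Section~\ref{sec:plane_networks} applies directly to this integrand.

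The key idea is to invoke the elementary inequality $\log(1+x)\le x$, valid for $x\ge 0$, which yields the pointwise lower bound $1/\log(1+S(z)) \ge 1/S(z)$ for every $z\in{\cal C}(0)$. Since $S(z) = h(\norm{z})/(N_0+{\cal I}(z))$, the dominated integrand is $1/S(z) = (N_0+{\cal I}(z))/h(\norm{z})$, which is \emph{exactly} an affine function of the interference in the sense of assumption~\ref{a:linear interf}; under the power-law path loss (assumption~\ref{a:power_law_path}, $h(r)=Pr^{-\eta}$) the coefficients are $f_0(r) = N_0 r^{\eta}/P$ and $f_1(r) = r^{\eta}/P$. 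Because both integrands are nonnegative, integrating over ${\cal C}(0)$ and taking Palm expectations preserves the inequality, so
\eqs{ \expecp{\rho_0} \ge \frac{\log(2)\,\sigma\lambda_{us}}{w}\, \expecp{\int_{{\cal C}(0)} \frac{1}{S(z)}\,dz}, }
and the right-hand expectation is computed directly from Corollary~\ref{cor:plane_load_mean_linear_path} with the above $f_0,f_1$.

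It then remains to evaluate the two resulting integrals. Substituting $f_0$ and $f_1$ into Corollary~\ref{cor:plane_load_mean_linear_path} splits the bound into a noise term proportional to $\int_0^{+\infty} r^{\eta+1} e^{-\lambda\pi r^2}\,dr$ and an interference term proportional to $\int_0^{+\infty} r^{3} e^{-\lambda\pi r^2}\,dr$. Both are standard Gaussian moment integrals, evaluated via $\int_0^{+\infty} r^{n} e^{-ar^2}\,dr = \tfrac12 a^{-(n+1)/2}\Gamma\!\big((n+1)/2\big)$: the first produces the $\Gamma(1+\eta/2)$ factor, with the power $(\sqrt{\pi\lambda})^{-\eta}$ arising from the $-\eta/2$ part of the exponent of $\lambda\pi$, while the second, using $\Gamma(2)=1$, collapses to the clean $2/(\eta-2)$ term. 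Assembling the constants reproduces exactly~\eqref{eq:load_bound}, and the limiting statement $\expecp{\rho_0}\to+\infty$ as $\eta\to 2^+$ follows immediately, since the interference term $2/(\eta-2)$ diverges while the noise term stays finite. The computation is routine once the inequality $\log(1+x)\le x$ is in place; the only genuine step is recognizing that this single bound lands the integrand precisely in the affine-interference framework, so that all the averaging over the Poisson-Voronoi geometry is delegated to the already-proved Corollary~\ref{cor:plane_load_mean_linear_path} rather than requiring any new stochastic-geometry argument.
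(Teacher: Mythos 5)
Your proposal is correct and follows essentially the same route as the paper: the paper's opening step ``by concavity of the logarithm'' is exactly your inequality $\log(1+x)\leq x$, yielding the pointwise bound $f(z,{\cal I}(z)) \geq \frac{\log(2)\lambda_{us}\sigma}{w}\frac{N_0+{\cal I}(z)}{h(\norm{z})}$, after which the paper likewise invokes Corollary~\ref{cor:plane_load_mean_linear_path} (with the constant absorbed into $f_0,f_1$ rather than factored out, a purely cosmetic difference) and evaluates the same two Gaussian moment integrals via $\int_0^{+\infty} r^a e^{-\lambda\pi r^2}\,dr = \Gamma((a+1)/2)/\bigl(2(\lambda\pi)^{(a+1)/2}\bigr)$. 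Your treatment of the limit $\eta\to 2^+$ also matches the intended argument, so nothing is missing.
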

	It is noted that the lower bound~\eqref{eq:load_bound} is particularly simple and is \emph{tight in the low \ac{SINR} regime}. 
	Corollary~\ref{cor:th:plane_load_var} states that the average load becomes infinite for elastic traffic when the signal attenuation is a power law, with path-loss exponent close to free space propagation. Another interesting fact is that the average load does not depend on the transmitted power $P$ in the interference limited regime. Indeed, in stationary probability, the average distance between a user located at $0$ and the closest \ac{BS} is $1/\sqrt{\pi \lambda} = 2 \lambda \pi \int_{0}^{+\infty} r^2 e^{-\lambda \pi r^2}$. So the term $P ( \sqrt{\pi \lambda} )^{\eta} /N_0$ is the ratio between the received power at distance $1/\sqrt{\pi \lambda}$ and the thermal noise power $N_0$. When this term is large, the first term of~\eqref{eq:load_bound} becomes negligible, and 
	\eqs{
	\expecp{ \rho_0 } \approx \lambda_{us} \sigma \frac{2 \log(2)}{(\eta - 2)w \lambda}.
	}
	Also, in this regime, the load is inversely proportional to the \ac{BS} density. Namely, \emph{ the network capacity is proportional to the number of deployed \ac{BS} .} 
\subsection{Dense networks}
	In dense networks ($\lambda \to +\infty$), the interference becomes normally distributed. Then $K$ and $L$ can be calculated as integrals of the Gaussian distribution, and we only need to calculate the first and second moments of the interference. In the case of a \ac{PPP} on the line (i.e $M = \RR$), convergence of shot noise to a Gaussian process is well known (\cite{davenport1958introduction,feller1966introduction, Lowen1990}). Theorem~\ref{th:dense_network_gen1} provides a generalization. We prove a more general result (Theorem~\ref{th:dense_network_gen2}) in appendix~\ref{subsec:proof:dense_network_gen}. 
	
\begin{theorem}\label{th:dense_network_gen1}
	Consider $\Phi$ a homogenous \ac{PPP} on a closed set ${\cal M} \subset \RR^2$ with intensity $\lambda$, and $h$ given by assumptions~\ref{a:power_law_path}. Then, for all $N \geq 1$, $(z_1,\cdots,z_N) \in (M \setminus {\cal M} )^N$, the distribution of $({\cal I}(z_1), \cdots ,  {\cal I}(z_N))$ converges in distribution to a multivariate normal distribution when $\lambda \to +\infty$. Namely:
		\als{
			\frac{1}{\sqrt{\lambda}} & ({\cal I}(z_1) - \expec{{\cal I}(z_1)}, \cdots ,  {\cal I}(z_N) - \expec{{\cal I}(z_N)}) \sk
			&\to  {\cal N}(0,\Sigma(z_1, \cdots ,z_N)), 
		}
	with $\Sigma(z_1, \cdots , z_N )$ its covariance matrix. The covariance is given by:
	\als{
		\cov( & {\cal I}(z_i) / \sqrt{\lambda} , {\cal I}(z_{i^\prime})/ \sqrt{\lambda} ) \sk 
		&= \expec{G^2} P^2 \int_{ {\cal M} } ( \norm{z_{i^\prime} - x}  \norm{z_i - x})^{-\eta} dx.
	}
\end{theorem}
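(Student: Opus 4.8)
The plan is to establish the multivariate convergence by combining the Cramér--Wold device with Lévy's continuity theorem: it suffices to fix an arbitrary $t = (t_1,\dots,t_N) \in \RR^N$ and show that the scalar, centered, scaled combination $\frac{1}{\sqrt\lambda}\sum_{j=1}^N t_j(\mathcal{I}(z_j)-\expec{\mathcal{I}(z_j)})$ converges in distribution to a centered Gaussian with variance $t^\top \Sigma(z_1,\dots,z_N)\,t$. The route is to compute the characteristic function of this quantity exactly and then show its logarithm converges to $-\tfrac12 t^\top \Sigma t$, which identifies the limit.

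First I would write the shot noise as a functional of the marked \ac{PPP}. Setting $w(x) = \sum_{j=1}^N t_j h(\norm{z_j - x})$, we have $\sum_j t_j \mathcal{I}(z_j) = \sum_{x \in \Phi} G_x\, w(x)$. By the characteristic functional of a marked \ac{PPP} on $\mathcal{M}$ with intensity $\lambda$ (the Fourier analogue of the shot-noise Laplace transform of proposition~\ref{prop:shot_noise_laplace}), the characteristic function of the centered, scaled sum equals $\exp(\phi_\lambda)$, where, after using Campbell's formula $\expec{\mathcal{I}(z_j)} = \lambda \expec{G}\int_{\mathcal{M}} h(\norm{z_j - x})\,dx$ to fold the centering into the exponent,
\[
\phi_\lambda = \int_{\mathcal{M}} \expec{\lambda\Big(e^{\,i G w(x)/\sqrt\lambda} - 1 - i\,G w(x)/\sqrt\lambda\Big)}\,dx .
\]

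The heart of the argument is a dominated-convergence passage to the limit inside this integral. Using the elementary bound $|e^{is} - 1 - is| \le s^2/2$ with $s = G w(x)/\sqrt\lambda$, the integrand is dominated \emph{uniformly in} $\lambda$ by $\tfrac12 \expec{G^2}\, w(x)^2$, while for each fixed $x$ it converges pointwise to $-\tfrac12 \expec{G^2}\, w(x)^2$ as $\lambda \to +\infty$. Hence $\phi_\lambda \to -\tfrac12 \expec{G^2}\int_{\mathcal{M}} w(x)^2\,dx$. Expanding $w(x)^2 = \sum_{i,i'} t_i t_{i'}\, h(\norm{z_i-x})h(\norm{z_{i'}-x})$ with $h(r) = P r^{-\eta}$ gives exactly $-\tfrac12 t^\top \Sigma t$, with $\Sigma_{ii'} = \expec{G^2} P^2 \int_{\mathcal{M}} (\norm{z_i-x}\norm{z_{i'}-x})^{-\eta}\,dx$ as stated. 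Lévy's theorem then yields the one-dimensional convergence, and Cramér--Wold upgrades it to the joint multivariate statement.

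The main obstacle, and really the only nontrivial point, is to verify that the dominating integral $\expec{G^2}\int_{\mathcal{M}} w(x)^2\,dx$ is finite, which is what legitimizes dominated convergence and makes the limiting covariance well-defined. Two features of the hypotheses rescue us. The assumption $(z_1,\dots,z_N) \in (M\setminus\mathcal{M})^N$ keeps $\norm{z_j - x}$ bounded away from $0$ on $\mathcal{M}$, so $h$ has no singularity over the integration domain; and when $\mathcal{M}$ is unbounded, the tail decay $h(r)^2 \sim P^2 r^{-2\eta}$ with $\eta > 2$ gives $\int w^2 < \infty$, since each product term behaves like $\norm{x}^{-2\eta}$ at infinity and $2\eta > 2$ ensures integrability in $\RR^2$. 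I would also stress that only $\expec{G^2} < \infty$ is required — the uniform quadratic domination sidesteps any appeal to third moments of $G$ — so the argument is clean under the standing \ac{i.i.d.} mark assumption.
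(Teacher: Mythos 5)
Your proposal is correct, and it reaches the theorem by a genuinely different technical route than the paper. The paper does not prove Theorem~\ref{th:dense_network_gen1} directly: it deduces it from a more general result (Theorem~\ref{th:dense_network_gen2}, for a point process with arbitrary intensity measure $\lambda\, m(dz)$), proved by working with the \emph{Laplace} transform of the shot noise, performing a second-order Taylor expansion of ${\cal G}$ around $0$, and controlling the remainder uniformly over the integration domain through the extra hypothesis $\esssup_x h(\norm{z-x}) < +\infty$ (which holds in the setting of Theorem~\ref{th:dense_network_gen1} precisely because each $z_i$ lies outside the closed set ${\cal M}$ --- the same observation you make). You instead fix a direction by Cram\'er--Wold, compute the \emph{characteristic} function, and pass to the limit by dominated convergence using the elementary bound $\abs{e^{is}-1-is} \leq s^2/2$, which dominates the integrand uniformly in $\lambda$ by $\frac{1}{2}\expec{G^2} w(x)^2$. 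Your route buys two things: first, rigor at the weak-convergence step, since L\'evy's continuity theorem applies directly to characteristic functions, whereas the paper's argument infers convergence in distribution of centered (hence signed) variables from convergence of a Laplace-type transform without comment; second, economy of hypotheses, since the dominated-convergence argument never uses the essential-supremum condition, only the square-integrability $\int_{\cal M} w(x)^2 dx < +\infty$ --- which you verify explicitly from the positive distance of the $z_i$ to the closed set ${\cal M}$ and the tail decay $2\eta > 2$ --- so your proof would in fact extend Theorem~\ref{th:dense_network_gen2} verbatim to a general intensity measure $m$ with the $\esssup$ assumption dropped. What the paper's version buys is that the statement is established at once for general $m$, and the Taylor expansion of ${\cal G}$ makes the uniform smallness of its argument explicit; on the other hand its remainder bookkeeping (a single $a(\delta)$ term pulled outside the integral) is looser than your domination, which it implicitly needs anyway. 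One cosmetic remark: under Assumption~\ref{a:power_law_path} the marks satisfy $G \equiv 1$, so the moment condition $\expec{G^2} < +\infty$ that you carefully track is automatic for the theorem as stated, though keeping general $G$ as you do matches the covariance formula in the statement and the paper's own general theorem.
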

	Specializing theorem~\ref{th:dense_network_gen1} to $N=1$ and $N=2$ we obtain the limit distribution of ${\cal I}(z)$ and $({\cal I}(z) , {\cal I}(z^\prime) )$ which can be used to compute $K$ and $L$ respectively. We do not write the $N=2$ case for clarity.
\begin{corollary}\label{cor:dense_network_1}
	With assumptions~\ref{a:power_law_path}, conditional to $z \in {\cal C}(0)$, the interference at $z$ converges in law to a normal distribution:
	\eqs{\frac{1}{\sqrt{\lambda}}( {\cal I}(z)  - \expec{ {\cal I}(z)| z} ) \tends{\lambda}{+\infty}  ( 0, \Sigma(z)) , }
	 with:
	\eqs{\var( {\cal I}(z)/ \sqrt{\lambda} | z) =  \pi \expec{G^2} P^2  \abs{z}^{2(1 - \eta)}/(\eta - 1) .}
\end{corollary}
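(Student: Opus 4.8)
The plan is to derive this corollary as the $N=1$ specialization of Theorem~\ref{th:dense_network_gen1}; the only real content is identifying the set ${\cal M}$ to which the theorem is applied and evaluating the single resulting variance integral.

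First I would recall from Section~\ref{sec:model} that conditioning on $z \in {\cal C}(0)$ means, under Palm probability, that $\Phi$ is a \ac{PPP} of intensity $\lambda$ on $\CC \setminus {\cal B}_1(z) = \CC \setminus {\cal B}(z, \abs{z})$: for $z$ to lie in the typical cell, no interfering \ac{BS} may be closer to $z$ than the \ac{BS} at the origin, i.e. every point of $\Phi$ must fall outside the ball ${\cal B}(z, \abs{z})$. Once $z$ is fixed this exclusion region is deterministic, so Theorem~\ref{th:dense_network_gen1} applies directly with ${\cal M} = \CC \setminus {\cal B}(z, \abs{z})$ (taking the ball open, so that ${\cal M}$ is closed as required; the boundary circle has Lebesgue measure zero and is immaterial). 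The admissibility requirement $z \in M \setminus {\cal M} = {\cal B}(z, \abs{z})$ holds for any $z \neq 0$, since the centre of a ball lies in the ball. The claimed convergence in law to a centred Gaussian is then immediate.

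It remains to compute the limiting variance by specializing the covariance formula of Theorem~\ref{th:dense_network_gen1} to $i = i^\prime = 1$, which gives
\eqs{ \var({\cal I}(z)/\sqrt{\lambda} \mid z) = \expec{G^2} P^2 \int_{\CC \setminus {\cal B}(z,\abs{z})} \norm{z - x}^{-2\eta} dx. }
I would substitute $u = x - z$, so that the domain becomes $\Set{ u : \norm{u} \geq \abs{z} }$, and pass to polar coordinates centred at the origin. This reduces the integral to $2\pi \int_{\abs{z}}^{+\infty} r^{1 - 2\eta} dr$, which converges because $\eta > 2$ gives $1 - 2\eta < -1$, and evaluates to $\pi \abs{z}^{2(1 - \eta)}/(\eta - 1)$. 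Multiplying by $\expec{G^2} P^2$ yields exactly the stated expression (and under assumption~\ref{a:power_law_path} one has $\expec{G^2} = 1$).

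I do not expect a genuine obstacle: this is a routine application of the already established Theorem~\ref{th:dense_network_gen1} combined with an elementary radial integral. The only points requiring mild care are the geometric identification of the correct empty region ${\cal B}(z, \abs{z})$ and the observation that the singularity of $\norm{z - x}^{-2\eta}$ at $x = z$ lies inside ${\cal B}(z,\abs{z})$ and is therefore excluded from the domain of integration, so that convergence is governed solely by the behaviour at infinity, which is guaranteed by $\eta > 2$.
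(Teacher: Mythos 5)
Your proposal is correct and takes essentially the same route as the paper, which obtains this corollary precisely by specializing Theorem~\ref{th:dense_network_gen1} to $N=1$, using the fact that conditional on $z \in {\cal C}(0)$ the process $\Phi$ is a homogeneous PPP on $\CC \setminus {\cal B}(z,\abs{z})$, so that the covariance formula reduces to $\expec{G^2}P^2 \, 2\pi\int_{\abs{z}}^{+\infty} r^{1-2\eta}\,dr = \pi\expec{G^2}P^2\abs{z}^{2(1-\eta)}/(\eta-1)$, in agreement with the variance recorded in proposition~\ref{prop:shot_noise_plane2}. Your explicit checks (closedness of ${\cal M}$, $z$ lying in $M \setminus {\cal M}$, and the singularity at $x=z$ being excluded so that only the $\eta>2$ decay at infinity matters) are exactly the details the paper leaves implicit.
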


\section{Numerical experiments}\label{sec:numerical_experiments}

	In this section we perform numerical experiments to study the distribution of the loads. We consider elastic traffic, so that the function $f$ is given by equations~\eqref{eq:elastic_loads} and~\eqref{eq:elastic_loads_data_rate}.
	
	We use the following parameters: path-loss $h(r) = P r^{-\eta}$ with $P = 10^{-128/10 - 3 \eta}$, $\eta > 2$, no shadowing $G \equiv 1$, total bandwidth $w = 5$ Mhz, total \ac{BS} transmit power $46$ dBm, thermal noise power $N_0 = -174$ dBm/Hz, traffic  $\sigma \lambda_{us} = 10$ Mbits/km${}^2$/s. We use $\eta = 3.5$ unless otherwise stated, which corresponds to typical dense urban environments. To simulate the distribution of the loads we proceed similarly to~\cite{GloaguenFSS05} to simulate the typical cell. To simulate the interference received over the typical cell, we simulate the \ac{PPP} on a window large enough so that there are $3 \times 10^{3}$ interfering \acp{BS} on average. To evaluate the distribution of the loads, $10^{4}$ independent samples of the typical cell are drawn. Numerical integrals are computed using Gauss-Kronrod quadrature.
\subsection{Loads distribution}
\begin{figure}
	  \centering
	  \begin{subfigure}[b]{\figsize}
	          \centering
	          \includegraphics[width=\textwidth]{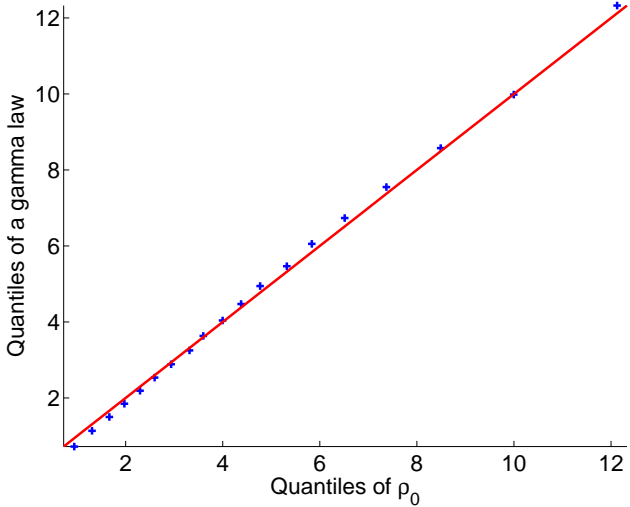}
	          \caption{$\eta=2.5$, $\lambda = 1 \text{BS/km}^2$}
	          \label{subfig:compare_loads1}
	  \end{subfigure}\\
	  \begin{subfigure}[b]{\figsize}
	          \centering
	          \includegraphics[width=\textwidth]{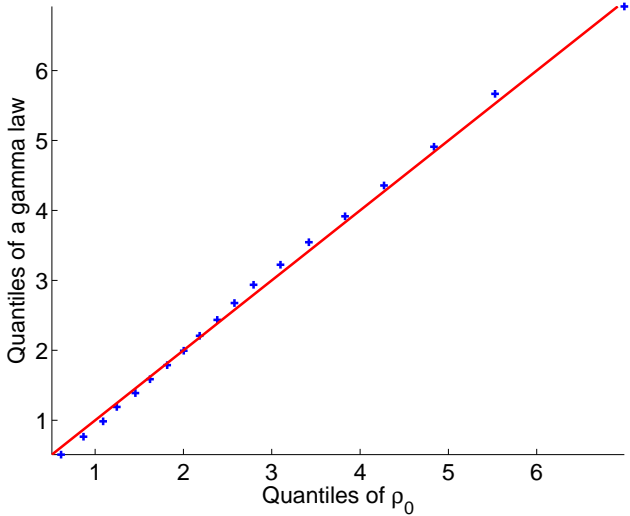}
	          \caption{$\eta=3.2$, $\lambda = 1 \text{BS/km}^2$}
	          \label{subfig:compare_loads2}
	  \end{subfigure}\\
	  \begin{subfigure}[b]{\figsize}
	          \centering
	          \includegraphics[width=\textwidth]{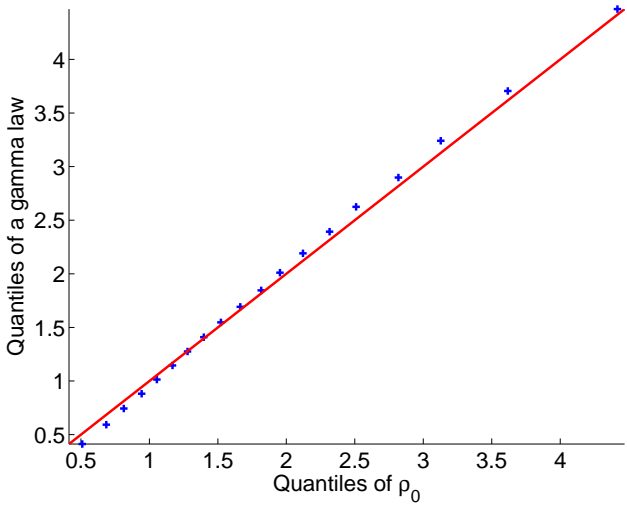}
	          \caption{$\eta=4$, $\lambda = 1 \text{BS/km}^2$}
	          \label{subfig:compare_loads3}
	  \end{subfigure}
	  \caption{Quantile-to-Quantile comparison between the loads distribution and a gamma distribution}\label{fig:gamma_loads}
\end{figure}
	On figure~\ref{fig:gamma_loads} we compare the distribution of the loads obtained by simulation to a gamma distribution with the same mean and variance for $\eta \in \Set{2.5, 3.2, 4}$ using a quantile-to-quantile plot. For the three values of $\eta$ we observe a good fit between the two. It is noted that from the results of~\cite{Calka2003}, even the area of the Voronoi cell \emph{does not} follow a gamma distribution, so in general the load cannot be gamma distributed. For practical purposes however, the gamma distribution provides a good approximation.
	
	It is noted that choosing a gamma distribution with the same mean and variance is not the maximum likelihood estimation, since the maximum likelihood involves the expectation of the logarithm of the load. We use this moment estimator because we only know the mean and variance of the loads by Theorems~\ref{th:plane_load_mean} and~\ref{th:plane_load_var}. We recall that from corollary~\ref{cor:th:plane_load_var} the load distribution becomes heavy-tailed ($\expecp{\rho_0} \to +\infty$) when $\eta \to 2^+$, so that it cannot be approximated by a light-tailed distribution such as the gamma in this case.

\subsection{Average loads}
On figure~\ref{fig:frequency_reuse}, we plot the average load as a function of the \ac{BS} density $\lambda$ with $\eta \in \Set{2.5, 3.5, 4}$. The four curves are denoted: 
\begin{itemize}
\item a) ``Average load'': the average load calculated by~\eqref{eq:plane_load_mean} of theorem~\ref{th:plane_load_mean},
\item b) ``$I \approx E[I]$'': an approximation of the average load by replacing the interference at a given location ${\cal I}(z)$ by its expectation $\expec{ {\cal I}(z)|z}$  .
\item c) ``$I - E[I]\approx N(0,\Sigma)$'': an approximation of the average load by replacing the interference at a given location ${\cal I}(z)$ by a normally distributed \ac{r.v.} with variance given by corollary~\ref{cor:dense_network_1}.
\item d) ``$I \equiv 0$'': the average load when there is no interference, i.e ${\cal I}(z) \equiv 0$.
\end{itemize}

	When $\lambda$ increases, the interference increases, but the size of the typical cell decreases so that the amount of traffic served by a given \ac{BS} decreases. Namely there is more interference but less congestion. On all curves, the average load decreases with $\lambda$ which shows that the effect of reduced congestion dominates the effect of increased interference.
	
	We observe that approximations b) and c) are fairly accurate, and that the accuracy is better when $\eta$ is close to $2$. This is because, when $\eta$ decreases, the ratio between interference and path loss $\frac{ \expec{ {\cal I}(z) | z } }{h(z)}$ increases, we get closer to the low \ac{SINR} regime and $I \to f(z,I)$ becomes closer to a linear function as explained in subsection~\ref{subsec:influence_interference}. The same argument also justifies the fact that for a given value of $\lambda$, the average load increases when the path-loss exponent $\eta$ decreases.

	There is a large difference between the load with interference a) and without d), showing that taking into account interference for predicting the loads distribution is indeed necessary.
\begin{figure}
	  \centering
	  \begin{subfigure}[b]{\figsize}
	          \centering
	          \includegraphics[width=\textwidth]{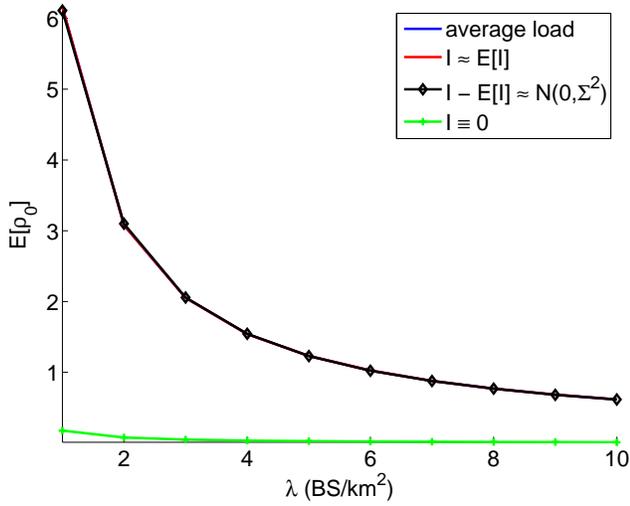}
	          \caption{$\eta=2.5$}
	          \label{subfig:average_loads1}
	  \end{subfigure}\\
	  \begin{subfigure}[b]{\figsize}
	          \centering
	          \includegraphics[width=\textwidth]{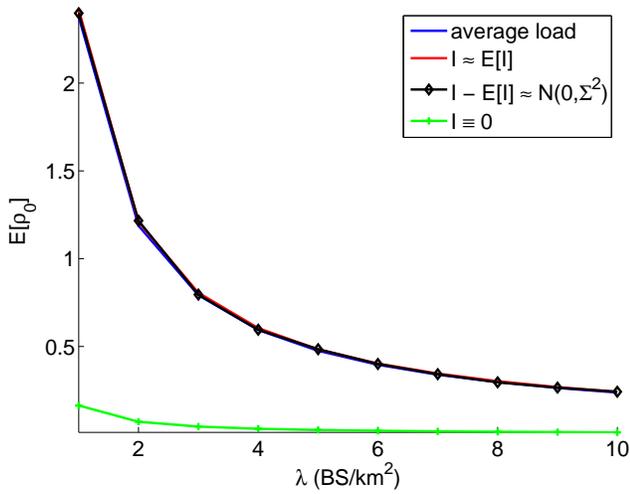}
	          \caption{$\eta=3.5$}
	          \label{subfig:average_loads2}
	  \end{subfigure}\\
	  \begin{subfigure}[b]{\figsize}
	          \centering
	          \includegraphics[width=\textwidth]{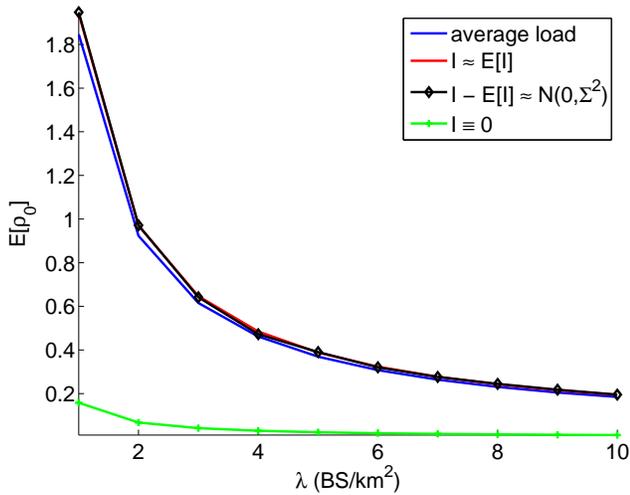}
	          \caption{$\eta=4$}
	          \label{subfig:average_loads3}
	  \end{subfigure}
	  \caption{Average load as a function of the BS density $\lambda$}\label{fig:average_loads}
\end{figure}
\subsection{Frequency reuse}
	On figure~\ref{fig:frequency_reuse}, we plot the average load as a function of $\lambda$ for different frequency reuse strategies as explained in subsection~\ref{subsec:sfr}. On figure~\ref{subfig:frequency_reuse1} we consider hard reuse with reuse factor $b \in \Set{1,\dots,4}$.  On figure~\ref{subfig:frequency_reuse2} we consider soft reuse with $b = 3$ and $\kappa \in [-40,0]$ dB. For each value of $\lambda$ and $\kappa$, the threshold between edge and center $r_{edge}$ is chosen so that the average load of the center and edge are equal: $\expecp{\rho_0^{edge}} = \expecp{\rho_0^{center}}$. On figure~\ref{subfig:frequency_reuse3} we consider a soft reuse with $b = 3$, $\lambda = 1 \text{BS/km}^2$ and $\kappa \in [-40,0]$ dB.
	
	We see that hard reuse schemes actually diminish the network performance by \emph{increasing} the average load. Reuse $1$ gives the best performance. This is true regardless of the \ac{BS} density $\lambda$. On the other hand, soft reuse schemes bring considerable improvement. The value $\kappa = -20$ dB gives the best performance, and the average load for this value is about half of the average load for reuse $1$. This shows that for \ac{PPP} networks, soft frequency reuse allows to serve twice as much traffic with the same number of \acp{BS}.
	
	  Those results are in line with previous work such as~\cite{BonaldHegdeFrequencyReuse} where the capacity of a regular network (possibly with small random perturbations) serving elastic traffic was studied. Namely soft reuse increases the network performance noticeably, while hard reuse brings little to no improvement. A noticeable difference though is that the optimal power reduction for the completely random \ac{PPP} networks $\kappa$ is $-20$ dB, while for regular networks it is around $-5$ dB. Namely there is more than an order of magnitude of difference. 

\begin{figure}
	  \centering
	  \begin{subfigure}[b]{\figsize}
	          \centering
	          \includegraphics[width=\textwidth]{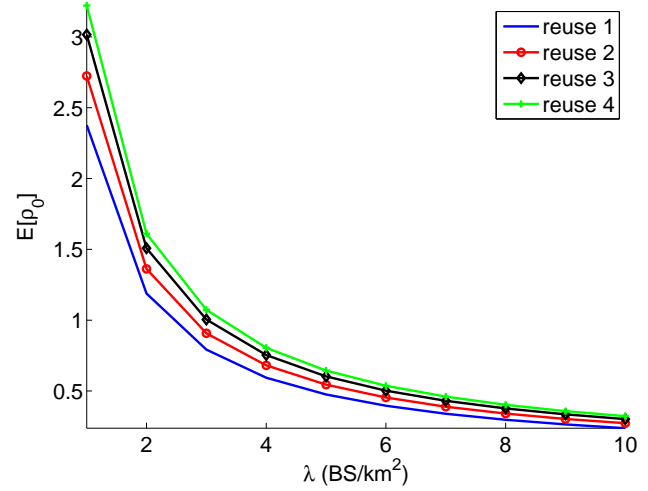}
	          \caption{Hard reuse, $\eta=3.5$}
	          \label{subfig:frequency_reuse1}
	  \end{subfigure}\\
	  \begin{subfigure}[b]{\figsize}
	          \centering
	          \includegraphics[width=\textwidth]{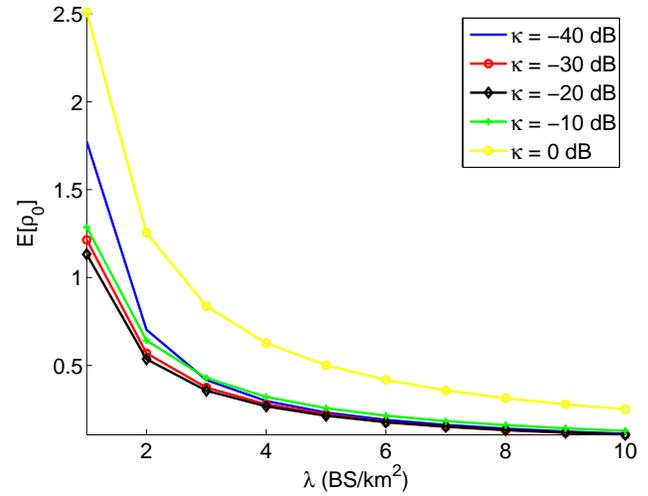}
	          \caption{Soft reuse $3$, $\eta=3.5$}
	          \label{subfig:frequency_reuse2}
	  \end{subfigure}\\
	  \begin{subfigure}[b]{\figsize}
	          \centering
	          \includegraphics[width=\textwidth]{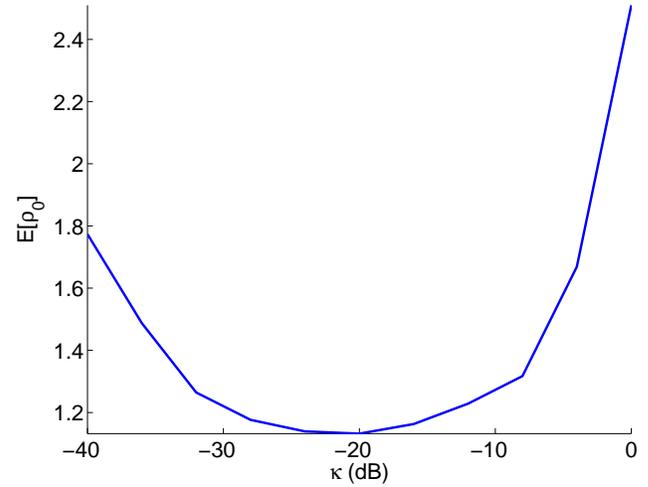}
	          \caption{Soft reuse $3$, $\eta=3.5$, $\lambda = 1 \text{BS/km}^2$: \\ \centering{ optimal $\kappa$} }
	          \label{subfig:frequency_reuse3}
	  \end{subfigure}	  
	  \caption{Average load for different frequency reuse strategies}\label{fig:frequency_reuse}
\end{figure}

\section{Conclusion}\label{sec:conclusion}	
	We have considered the flow-level performance of random wireless networks. The locations of \acp{BS} follow a \ac{PPP}. We take into account flow-level dynamics  by modeling each \ac{BS} as a queue. The performance and stability of each \ac{BS} depends on its load. In certain cases the full distribution of the load has been derived through its Laplace transform. In all cases, we have given formulas to calculate the first and second moments of the loads. Networks on the line and on the plane have both been considered. Our model is generic enough to include features of recent wireless networks such as 4G networks (LTE). In dense networks, we have shown that the inter-cell interference power becomes normally distributed. Hence some computations reduce to an integral with respect to the Gaussian distribution. Using numerical experiments we have demonstrated that in certain cases of practical interest, the loads distribution can be approximated by a gamma distribution with known mean and variance, and that this approximation is very accurate.
	
\section*{Acknowledgment}
	The authors are thankful to Alexandre Prouti\`ere for his helpful comments on the traffic models.


\appendices

\section{Auxiliary results on shot noise generated by a PPP}\label{sec:shotnoise}
	\subsection{Laplace transform of shot noise}
	The Laplace transform of shot noise is given by Theorem~\ref{prop:shot_noise_laplace}. A proof can be found for instance in \cite{Baccelli2009}.
	\begin{proposition}\label{prop:shot_noise_laplace}
		Consider $\Phi$ a \ac{PPP} on $M$ with measure $\lambda(dz)$. for all $N \geq 1$, $(z_1,\cdots,z_N) \in M^N$, $(s_1,\cdots,s_N) \in \CC^N$: with strictly positive real part:
		\als{
			\expec{  \exp(  - \sum_{i=1}^N  s_i	{\cal I}(z_i))  }  \sk 
			= \exp \left\{ - \int_{M} \left[ 1 - {\cal G}\left( s_i  \sum_{i=1}^N h(\norm{z_i - z}) \right) \right] \lambda(dz) \right\}.
		}
	\end{proposition}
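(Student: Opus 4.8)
The plan is to recognize the left-hand side as the Laplace functional (equivalently, the probability generating functional) of the marked Poisson point process $\{(x_n,G_n)\}_n$ evaluated at one specific test function. First I would interchange the order of summation in the exponent. Since ${\cal I}(z_i)=\sum_n G_n h(\norm{z_i-x_n})$, linearity gives
\[
\sum_{i=1}^N s_i {\cal I}(z_i)=\sum_n G_n\,v(x_n),\qquad v(x):=\sum_{i=1}^N s_i\, h(\norm{z_i-x}),
\]
so that the exponential factorizes over the atoms of $\Phi$ as $\exp(-\sum_i s_i{\cal I}(z_i))=\prod_n \exp(-G_n v(x_n))$. Note that the statement should read ${\cal G}(\sum_i s_i h(\norm{z_i-z}))={\cal G}(v(z))$ inside the integral; the displayed $s_i$ is a typo for the summand index.

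Next I would invoke the generating functional of a Poisson process carrying i.i.d.\ marks independent of $\Phi$: for a measurable $w$ with $0\le |w|\le 1$,
\[
\expec{\prod_n w(x_n,G_n)}=\exp\Big\{-\int_M \big(1-\expec{w(z,G)}\big)\,\lambda(dz)\Big\},
\]
where the inner expectation is taken over a generic mark $G$. Choosing $w(x,G)=\exp(-G\,v(x))$ and using the definition ${\cal G}(s)=\expec{\exp(-sG)}$, the mark-expectation collapses to $\expec{w(z,G)}={\cal G}(v(z))$, which reproduces exactly the claimed formula. This identity is standard and I would cite it (e.g.\ \cite{Baccelli2009}), but I would still indicate the one-line derivation above so the specialization to the shot-noise kernel is transparent.

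The main obstacle is ensuring that every manipulation is well-defined on the (possibly unbounded) domain $M$. Because $\Re s_i>0$ and $h\ge 0$, we have $\Re v(z)\ge 0$, hence $|w(x,G)|=\exp(-\Re(G\,v(x)))\le 1$ and the integrand $1-{\cal G}(v(z))$ has nonnegative real part, so the product stays bounded and the exponential representation is meaningful. The delicate point is absolute convergence of the integral: I would establish $\int_M |1-{\cal G}(v(z))|\,\lambda(dz)<\infty$ using the elementary bound $|1-{\cal G}(s)|\le |s|\,\expec{G}$ valid for $\Re s\ge 0$ (from $|1-e^{-w}|\le|w|$ when $\Re w\ge 0$), combined with the decay of $h$ — for instance $\eta>2$ under Assumption~\ref{a:power_law_path} — to control the tail, and the boundedness of $|{\cal G}|\le 1$ to handle the neighborhoods of the singular points $z\to z_i$. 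Once absolute convergence is secured, the interchange of expectation with the almost surely convergent product, and the passage to the integral, follow from the standard construction of the Laplace functional of a Poisson process, completing the proof.
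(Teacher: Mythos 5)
Your proof is correct and is essentially the canonical argument: the paper gives no proof of this proposition, deferring to \cite{Baccelli2009}, where the identity is obtained exactly as you obtain it, namely by evaluating the Laplace functional (probability generating functional) of the independently marked Poisson process at the test function $w(x,G)=\exp\bigl(-G\sum_{i=1}^N s_i\, h(\norm{z_i-x})\bigr)$ and integrating out the mark via ${\cal G}$. You also correctly flag the typo in the statement (the $s_i$ belongs inside the sum, i.e.\ ${\cal G}\bigl(\sum_{i=1}^N s_i h(\norm{z_i-z})\bigr)$), and your convergence discussion --- $\abs{1-{\cal G}(v(z))}\le \min\bigl(2,\abs{v(z)}\expec{G}\bigr)$ for $\Re\, v\ge 0$, with the tail controlled by the decay of $h$ and the singularities at the $z_i$ by $\abs{{\cal G}}\le 1$ --- supplies integrability details that the citation leaves implicit.
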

		\subsection{Expectation of a function of shot noise}
	It is noted that once the Fourier/Laplace transform of a random variable $I$ is known, the expectation of any function of $I$ can be calculated using the Plancherel-Parseval theorem recalled in proposition~\ref{prop:parseval}.
\begin{proposition}[Plancherel theorem] \label{prop:parseval}
	Consider $I$ a real random variable and $f$ a positive function:
	\eqs{
		\expec{f(I)} = \int_{\RR} \overline{\hat{f}(s)} \expec{ \exp( -s i I )} ds.
	}
	with $\hat{f}$ the Fourier transform of $f$, and $\overline{z}$ the complex conjugate of $z$.
\end{proposition}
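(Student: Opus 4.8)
The plan is to derive the identity from the Fourier inversion formula together with a single application of Fubini's theorem. First I would fix the normalisation of the Fourier transform so that the inversion formula reads
\eqs{ f(x) = \int_{\RR} \hat{f}(s) e^{i s x} ds, }
which holds, for instance, whenever $f$ and $\hat{f}$ both lie in $L^1(\RR)$; the statement for a general positive $f$ is then recovered by a standard approximation argument (mollifying $f$ and passing to the limit).

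Substituting $x = I$ and taking expectations yields
\eqs{ \expec{ f(I) } = \expec{ \int_{\RR} \hat{f}(s) e^{i s I} ds }. }
The crucial step is to exchange the expectation with the integral over $s$. Since $\abs{ \hat{f}(s) e^{i s I} } = \abs{ \hat{f}(s) }$ does not depend on $I$ and is integrable in $s$, the integrand is jointly integrable with respect to the product of Lebesgue measure and the law of $I$, so Fubini's theorem applies and gives
\eqs{ \expec{ f(I) } = \int_{\RR} \hat{f}(s) \expec{ e^{i s I} } ds, }
the inner expectation being the characteristic function of $I$.

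To reach the stated form, I would use that $\expec{ f(I) }$ is real (because $f$ is positive and $I$ real) and take complex conjugates on both sides. Using $\overline{\expec{ e^{i s I} }} = \expec{ e^{-i s I} } = \expec{ \exp(-s i I) }$, the right-hand side becomes $\int_{\RR} \overline{\hat{f}(s)} \expec{ \exp(-s i I) } ds$, which is exactly the claimed identity.

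The main obstacle is not the formal manipulation but its rigorous justification: the loose statement does not spell out integrability hypotheses, yet both the inversion formula and Fubini require $\hat{f} \in L^1$ (or, failing that, the $L^2$ Plancherel framework). I would handle the general case by approximating $f$ by Schwartz functions, or equivalently by smoothing the law of $I$, so that the exchange of integrals is legitimate at each finite stage and the identity survives in the limit; this is precisely the point where the boundedness/positivity assumptions on $f$ (and sufficient integrability of the law of $I$) must be invoked.
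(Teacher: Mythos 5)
The paper never proves this proposition: it is recalled as a standard fact, with only the remark that for $f$ not square integrable the Fourier transform must be taken in the sense of tempered distributions. Your argument is correct, but note that it follows a different route than the one the proposition's name suggests. The textbook Plancherel--Parseval route writes $\expec{f(I)} = \int_{\RR} f(x)\, p(x)\, dx$ with $p$ the density of $I$ and invokes the $L^2$ duality $\int f \overline{p} = \int \hat{f}\, \overline{\hat{p}}$, identifying $\hat{p}(s)$ with the characteristic function of $I$; this requires $I$ to have a density. Your inversion-plus-Fubini argument avoids any assumption on the law of $I$ --- the exchange of integrals only needs $\hat{f} \in L^1$, since $\abs{\hat{f}(s) e^{i s I}} = \abs{\hat{f}(s)}$ --- and your closing conjugation step, using that $\expec{f(I)}$ is real, correctly converts $\int_{\RR} \hat{f}(s) \expec{e^{i s I}}\, ds$ into the stated $\int_{\RR} \overline{\hat{f}(s)} \expec{\exp(-s i I)}\, ds$. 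Two small caveats. First, the pointwise substitution $x = I$ in the inversion formula is legitimate only for the continuous representative of $f$; this is automatic when $f, \hat{f} \in L^1$ (the inverse transform is continuous), and harmless at the Schwartz-approximation stage, but worth saying if the law of $I$ may charge Lebesgue-null sets. Second, your final approximation argument for general positive $f$ is a sketch rather than a proof: for positive non-integrable $f$ --- the case the paper actually uses, e.g. $f$ proportional to $1/R$ --- $\hat{f}$ need not be a function, and one should argue by monotone convergence along a mollified increasing sequence (or accept the tempered-distribution pairing, which is precisely the paper's own escape hatch). Within the precision at which the proposition is stated --- no normalization constants, no integrability hypotheses --- your proof is as complete as the statement permits.
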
	
	It is noted that if $f$ is not square integrable, the Fourier transform should be taken in the sense of tempered distributions.
	\subsection{Shot noise distributions of interest}
	We can specialize Theorem~\ref{prop:shot_noise_laplace} to obtain results on the distribution of shot noise generated by an \emph{homogenous} \ac{PPP}, conditional to the event $z \in {\cal C}(0)$.
	\begin{proposition}\label{prop:shot_noise_plane}
	The Laplace transform of ${\cal I}(z)$, conditional to $z \in {\cal C}(0)$ is:
	\eqs{ 
	\expec{ \exp( - s {\cal I}(z) )|  z } = \exp \Lp - 2 \pi \lambda \int_{\abs{z}}^{+\infty} r( 1 - {\cal G}(s h(r)))  dr \Rp.
	}
	and its first and second moments are:
	\als{
	\expec{ {\cal I}(z) |  z } &= 2 \pi \lambda \expec{G}  \int_{\abs{z}}^{+\infty} r h(r) dr,\sk
	\var( {\cal I}(z)  |  z ) &=  2 \pi \lambda \expec{G^2} \int_{\abs{z}}^{+\infty} r h(r)^2 dr,
	}
	\end{proposition}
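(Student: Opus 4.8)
The plan is to read off the conditional Laplace transform directly from the general shot-noise formula of Proposition~\ref{prop:shot_noise_laplace}, and then to recover the two moments by differentiating at $s=0$. The only genuinely geometric input is already supplied in Section~\ref{sec:model}: conditional to $z \in {\cal C}(0)$, under Palm probability $\Phi$ is a homogeneous \ac{PPP} on $\CC \setminus {\cal B}(z,\abs{z})$ with intensity $\lambda$ (no interferer can lie within distance $\abs{z}$ of $z$, since $0$ is the closest \ac{BS} to $z$). Hence the relevant intensity measure is $\lambda(dx) = \lambda \indic_{\CC \setminus {\cal B}(z,\abs{z})}(x)\, dx$, and specializing Proposition~\ref{prop:shot_noise_laplace} to $N=1$ gives
\eqs{ \expec{ \exp(-s {\cal I}(z)) \mid z } = \exp\Lp -\lambda \int_{\CC \setminus {\cal B}(z,\abs{z})} \Lp 1 - {\cal G}(s h(\abs{z-x})) \Rp dx \Rp. }

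Next I would perform the change of variable $u = x - z$. Since ${\cal B}(z,\abs{z})$ is the disk of centre $z$ and radius $\abs{z}$, the condition $x \notin {\cal B}(z,\abs{z})$ becomes $\abs{u} > \abs{z}$, while $\abs{z-x} = \abs{u}$. Passing to polar coordinates $u = r e^{i\phi}$, the integrand depends only on $r$, so the angular integration contributes a factor $2\pi$ and the radial part runs over $r \in (\abs{z}, +\infty)$. This yields exactly
\eqs{ \expec{ \exp(-s {\cal I}(z)) \mid z } = \exp\Lp -2\pi\lambda \int_{\abs{z}}^{+\infty} r \Lp 1 - {\cal G}(s h(r)) \Rp dr \Rp, }
which is the claimed expression.

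Finally, for the moments I would set $\psi(s) = -2\pi\lambda \int_{\abs{z}}^{+\infty} r \Lp 1 - {\cal G}(s h(r)) \Rp dr$, so that the Laplace transform is $e^{\psi(s)}$ with $\psi(0)=0$. Differentiating under the integral sign and using ${\cal G}(0)=1$, ${\cal G}'(0) = -\expec{G}$, ${\cal G}''(0) = \expec{G^2}$ (which follow from ${\cal G}(s)=\expec{\exp(-sG)}$), one obtains $\psi'(0) = -2\pi\lambda \expec{G} \int_{\abs{z}}^{+\infty} r h(r)\, dr$ and $\psi''(0) = 2\pi\lambda \expec{G^2} \int_{\abs{z}}^{+\infty} r h(r)^2\, dr$. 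Since for a Laplace transform of the form $e^{\psi}$ the first cumulant is $-\psi'(0)$ and the second cumulant (the variance) is $\psi''(0)$, the identities $\expec{{\cal I}(z)\mid z} = -\psi'(0)$ and $\var({\cal I}(z)\mid z) = \psi''(0)$ give precisely the two stated formulas.

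There is no deep conceptual obstacle here, as everything reduces to the stated geometric characterization and the general shot-noise transform. The only point that requires care is the standard analytic justification: interchanging differentiation and integration, and guaranteeing that the integrals $\int_{\abs{z}}^{+\infty} r h(r)\, dr$ and $\int_{\abs{z}}^{+\infty} r h(r)^2\, dr$ are finite. Under Assumption~\ref{a:power_law_path} these become $\int r^{1-\eta}\, dr$ and $\int r^{1-2\eta}\, dr$, both convergent for $\eta > 2$, so the differentiation at $s=0$ is legitimate; verifying this finiteness is the main technical (rather than conceptual) step.
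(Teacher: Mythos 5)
Your proposal is correct and follows essentially the same route as the paper: the paper likewise specializes Proposition~\ref{prop:shot_noise_laplace} to $N=1$ with intensity measure $\lambda \indic_{\CC \setminus {\cal B}_1(z)}(x)\,dx$ (where ${\cal B}_1(z) = {\cal B}(z,\abs{z})$) and then passes to polar coordinates centered at $z$ to obtain the Laplace transform. The paper's proof stops there and leaves the moment formulas implicit, so your explicit cumulant computation via $\psi'(0)$ and $\psi''(0)$, together with the finiteness check of $\int_{\abs{z}}^{+\infty} r h(r)\,dr$ and $\int_{\abs{z}}^{+\infty} r h(r)^2\,dr$, is a correct and slightly more complete rendering of the same argument.
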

	\begin{proof}
		By specialization of theorem~\ref{prop:shot_noise_laplace} to $N = 1$ and $\lambda(dx) = \lambda \indic_{ \CC \setminus {\cal B}_1(z)}(x) dx$:
	\als{
	 	 & \expec{ \exp( - s {\cal I}(z) )|  z }) \sk &= \exp \Lp -\lambda \int_{\CC \setminus {\cal B}_1(z)}(1-{\cal G}(s h(z - x)) ) dx \Rp.
	}
	Once again, using polar coordinates and centering (i.e $z - x = r e^{i \theta}$) gives the result:
	\eqs{
	\expec{ \exp( - s {\cal I}(z) )|  z } = \exp \Lp - 2 \pi \lambda \int_{\abs{z}}^{+\infty} ( 1 - {\cal G}(s h(r))) r dr \Rp.
	}
	\end{proof}
	
\begin{proposition}\label{prop:shot_noise_plane2}
	With assumptions~\ref{a:power_law_path}, the Laplace transform of ${\cal I}(z)$, conditional to $z \in {\cal C}(0)$ is given by:
	\als{ 
	 & \log( \expec{ \exp(  s {\cal I}(z) )|  z } ) \sk 
	 &=   \pi \lambda \Lb \abs{z}^2 \Rd +  \Ld  2 (P s)^{\frac{2}{\eta}} \gamma(P s \abs{z}^{-\eta},-2/\eta)/\eta \Rb ,
	}
	with $\gamma(x,s) = \int_{0}^{x} t^{s-1} e^{-t} dt$ the lower incomplete gamma function and $\Gamma(s) = \Gamma(+\infty,s) $ the gamma function.	Its first and second moments are:
	\als{
	\expec{ {\cal I}(z) |  z } &= 2 \pi \lambda \expec{G} P  r^{2-\eta}/(\eta-2) ,\sk
	\var( {\cal I}(z)  |  z ) &=  2 \pi \lambda \expec{G^2} P^2 r^{2-2\eta}/(2\eta-2). 
	}
\end{proposition}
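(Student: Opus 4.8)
The plan is to specialize the general expressions of Proposition~\ref{prop:shot_noise_plane} to the power-law path-loss of assumption~\ref{a:power_law_path}. Since $G \equiv 1$ we have $\mathcal{G}(u) = \expec{\exp(-uG)} = e^{-u}$ and $h(r) = P r^{-\eta}$. I would first substitute these into the Laplace-transform formula of that proposition, giving $\log \expec{\exp(-s{\cal I}(z)) | z} = -2\pi\lambda \int_{\abs{z}}^{+\infty} r\bigl(1 - e^{-sPr^{-\eta}}\bigr)\,dr$. The key observation is that this integrand is integrable at infinity precisely because $\eta > 2$ (its leading behaviour is $sP r^{1-\eta}$), whereas the two pieces $r$ and $r e^{-sPr^{-\eta}}$ are individually non-integrable and must be kept paired throughout the computation.

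Next I would perform the change of variable $t = sP r^{-\eta}$, under which $r\,dr = (sP)^{2/\eta}\eta^{-1} t^{-2/\eta-1}\,dt$ and the limits $r = \abs{z}, +\infty$ map to $t = sP\abs{z}^{-\eta}, 0$. This rewrites the integral as $(sP)^{2/\eta}\eta^{-1}\int_0^{x}(1 - e^{-t}) t^{-2/\eta-1}\,dt$ with $x = sP\abs{z}^{-\eta}$, isolating the dependence on $z$ in a single upper limit.

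The crux is the evaluation of $\int_0^{x}(1 - e^{-t}) t^{-2/\eta-1}\,dt$. I would integrate by parts with $u = 1 - e^{-t}$ and $dv = t^{-2/\eta-1}\,dt$. The boundary term at the origin vanishes because $1 - e^{-t} \sim t$ while $1 - 2/\eta > 0$ for $\eta > 2$, and the boundary term at $x$ contributes $-(\eta/2)(1-e^{-x})x^{-2/\eta}$; the remaining integral is $(\eta/2)\int_0^x t^{-2/\eta}e^{-t}\,dt = (\eta/2)\gamma(x, 1 - 2/\eta)$ in the paper's convention $\gamma(x,a)=\int_0^x t^{a-1}e^{-t}\,dt$. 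To lower the second argument to the advertised $-2/\eta$, I would apply the incomplete-gamma recurrence $\gamma(x, a+1) = a\,\gamma(x, a) - x^{a}e^{-x}$ at $a = -2/\eta$, which is exactly the identity that absorbs the $e^{-x}$ term and turns $\gamma(x, 1-2/\eta)$ into $-\tfrac{2}{\eta}\gamma(x,-2/\eta)$. Using $x^{-2/\eta} = (sP)^{-2/\eta}\abs{z}^2$, the algebraic part collapses to the stated $\pi\lambda\abs{z}^2$ and the gamma part to $2\pi\lambda(sP)^{2/\eta}\gamma(sP\abs{z}^{-\eta},-2/\eta)/\eta$, which together give the claimed closed form.

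For the two moments no new idea is required: substituting $h(r) = Pr^{-\eta}$ into the moment formulas of Proposition~\ref{prop:shot_noise_plane} reduces them to the elementary integrals $\int_{\abs{z}}^{+\infty} r^{1-\eta}\,dr = \abs{z}^{2-\eta}/(\eta-2)$ and $\int_{\abs{z}}^{+\infty} r^{1-2\eta}\,dr = \abs{z}^{2-2\eta}/(2\eta-2)$, both finite because $\eta > 2$, which yields the stated expressions (with $r = \abs{z}$). I expect the only genuine obstacle to be the third step: correctly handling the incomplete gamma function at the negative second argument $-2/\eta$, defined there only by analytic continuation since its defining integral diverges at the origin, and keeping the separately divergent pieces together until after the integration by parts.
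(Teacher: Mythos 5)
Your derivation is correct, but it cannot parallel the paper's proof for the simple reason that the paper offers none: the entire proof of this proposition in the paper is the citation ``See~\cite{AndrewsBaccelliGeometry}.'' Your argument supplies, within the paper's own framework, exactly the computation being outsourced: specialize Proposition~\ref{prop:shot_noise_plane} with ${\cal G}(u)=e^{-u}$ and $h(r)=Pr^{-\eta}$, substitute $t=sPr^{-\eta}$ (keeping the integrable pairing $(1-e^{-t})t^{-2/\eta-1}$ together, as you rightly insist, since the two pieces diverge separately), integrate by parts, and apply the recurrence $\gamma(x,a+1)=a\,\gamma(x,a)-x^{a}e^{-x}$ at $a=-2/\eta$; I have checked that the $e^{-x}$ boundary contributions cancel, that $x^{-2/\eta}=(sP)^{-2/\eta}\abs{z}^{2}$ collapses the algebraic piece to the stated $\pi\lambda\abs{z}^{2}$, and that the two moment integrals reduce as you say (finiteness resting on $\eta>2$, with $r=\abs{z}$ in the statement). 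Two remarks. First, what you obtain is $\log \expec{ \exp( - s {\cal I}(z) )|  z }$, whereas the proposition as printed has $\exp(s{\cal I}(z))$; since Proposition~\ref{prop:shot_noise_plane}, from which the result must follow, uses $e^{-s{\cal I}(z)}$, the printed sign is a typo, and you should flag this explicitly rather than silently matching ``the claimed closed form'' --- as printed, with real $s>0$, the left-hand side is a moment-generating quantity whose closed form would involve the continuation $s\mapsto -s$, not the stated right-hand side. Second, your closing caveat is well placed: the paper's literal definition $\gamma(x,s)=\int_{0}^{x}t^{s-1}e^{-t}dt$ diverges at $s=-2/\eta<0$, and your integration-by-parts identity is precisely the analytic continuation that gives the statement an unambiguous meaning. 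In terms of what each approach buys: the paper's citation buys brevity, while your self-contained derivation buys verifiability inside the paper's own toolkit and makes transparent exactly where $\eta>2$ enters (integrability at infinity and vanishing of the boundary term at the origin).
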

\begin{proof} See \cite{AndrewsBaccelliGeometry}. \end{proof}
\section{Proofs}\label{sec:proofs}
\subsection{Proof of theorem~\ref{th:load_line_no_interf}}\label{subsec:proof_th_load_line_no_interf}
\begin{proof}
	 First assume $(x_l,x_r)$ to be known. We decompose the load~\eqref{eq:line_load} in two terms using~\eqref{eq:line_load_distance} and the symmetry of $f_0$:
	\als{
		\rho_0 &= \int_{-x_l/2}^{x_r/2} f_0(z) dz = \int_{0}^{x_r/2} f_0(z) dz + \int_{0}^{x_l/2} f_0(z) dz \sk 
		&= F_0(x_l/2) + F_0(x_r/2).
	}
	By proposition~\ref{prop:geometry_line_cell}, $(x_l,x_r)$ are \ac{i.i.d.} so that:
	\al{\label{eq:th1e2}
	\expecp{ \exp( -s \rho_0 )} &= \expecp{ \exp( -s (F_0(x_l/2) + F_0(x_r/2)) )} \sk 
	&= ( \expecp{ \exp( -s F_0(x_l/2)) } )^2.
	}
	$x_l$ is exponentially distributed with parameter $\lambda$ so that the \ac{r.h.s.} of~\eqref{eq:th1e2} equals:
	\als{
		& \expecp{ \exp( -s \rho_0 )} \sk
		&=   \left(  \lambda \int_{0}^{+\infty} \exp( -s F_0(r/2)) \exp(-r \lambda) dr  \right)^2,
	}
	proving the result.
\end{proof}

\subsection{Proof of theorem~\ref{th:linear_interf_plane}}\label{subsec:proof_th_linear}
\begin{proof}
	 We first consider $(x_l,x_r)$ to be known. The interference at $z \in {\cal C}(0)$ can be decomposed into:
	\als{
		{\cal I}(z) &= G_{-1} h(\abs{z + x_l}) + G_1 h( \abs{z - x_r}) \sk
		&+ \sum_{n \in \NN \setminus \Set{-1,0,1} } G_n h(\abs{z - x_n}).
	}
	The interference dependent term in~\eqref{eq:load_line_linear_interf} can be written:
	\al{ \label{eq:th1eq2}
		\int_{-x_l/2}^{x_r/2} f_1(z) & {\cal I}(z) dz = G_{-1} \int_{-x_l/2}^{x_r/2} f_1(z) h(\abs{z + x_l})dz \sk
		&+ G_1 \int_{-x_l/2}^{x_r/2} f_1(z)  h( \abs{z - x_r})dz \sk
		&+  \sum_{n \in \NN \setminus \Set{-1,0,1} } G_n \int_{-x_l/2}^{x_r/2} f_1(z) h(\abs{z - x_n}) dz.
	}
	By definition of ${\cal K}$ (eq.~\ref{eq:th_1_def_k}), the \ac{r.h.s.} of \eqref{eq:th1eq2} becomes:
	\al{\label{eq:th1eq3}
	\int_{-x_l/2}^{x_r/2} f_1(z) {\cal I}(z) dz &= G_{-1} {\cal K}(-x_l,x_l,x_r) + G_{1} {\cal K}(x_r,x_l,x_r) \sk
	&+ \sum_{n \in \NN \setminus \Set{-1,0,1} }   G_{n} {\cal K}(x_n,r_l,r_r).
	}
	The third term on the \ac{r.h.s.} of \eqref{eq:th1eq3} is a shot noise with respect to the point process $\Phi \setminus [-x_l, x_r]$, with impulse response $z \mapsto {\cal K}(z,x_l,x_r)$. Conditional to $(x_l,x_r)$, $\Phi \setminus [-x_l, x_r]$ is a \ac{PPP} on $\RR \setminus [-x_l,x_r]$ with intensity $\lambda$. Hence we can apply proposition~\ref{prop:shot_noise_laplace}:
	\al{\label{eq:th1eq4}
	 \expecp{ \exp\left\{ -s \sum_{n \in \NN \setminus \Set{-1,0,1} } G_{n} {\cal K}(x_n,r_l,r_r)  \right\} | x_l,x_r } \sk
	 = \exp \left\{ -\lambda \int_{ \RR \setminus [-r_l,r_r] } (1 -  {\cal G}( s {\cal K} (z,r_l,r_r) ) ) dz \right\}
	}
	Furthermore, the Laplace transform of the two first terms of~\eqref{eq:th1eq3} is:
\al{\label{eq:th1eq5}
& \expecp{ \exp( -s G_{-1} {\cal K}(-x_l,x_l,x_r) -s G_{1} {\cal K}(x_r,x_l,x_r) ) | x_l,x_r} \sk
& = {\cal G}( s {\cal K}(-x_l,x_l,x_r) ) {\cal G}( s {\cal K}(x_r,x_l,x_r) ).
}
	Combining~\eqref{eq:th1eq4} and~\eqref{eq:th1eq5} we obtain the Laplace transform of the interference dependent term in~\eqref{eq:load_line_linear_interf}:
\als{
&\expecp{ \exp \Lp -s \int_{-x_l/2}^{x_r/2} f_1(z) {\cal I}(z) dz \Rp | x_l,x_r} \sk
&=  {\cal G}( s {\cal K}(-x_l,x_l,x_r) ) {\cal G}( s {\cal K}(x_r,x_l,x_r) ) \sk
& \exp \Lb -\lambda \int_{ \RR \setminus [-x_l,x_r] } (1 -  {\cal G}( s{\cal K} (z,x_l,x_r) ) ) dz \Rb \sk
&=  {\cal L}(s, x_l , x_r).
}
	The Laplace transform of the load conditional to $(x_l,x_r)$ is then:
\al{\label{eq:th1eq7}
	\expecp{ \exp( -s \rho_0 ) | x_l,x_r} \sk
	= {\cal L}(s, x_l , x_r) \exp( -s ( F_0(x_l/2) + F_0(x_r/2) )).
}
	Since $(x_l,x_r)$ are \ac{i.i.d.} and exponentially distributed, the load distribution is obtained by un-conditioning~\eqref{eq:th1eq7}:
	\al{\label{eq:th1eq8}
	& \expecp{ \exp( -s  \rho_0 )} = \lambda^2 \int_{\RR^2} {\cal L}(s,r_1 , r_2) \sk
	& \exp \left\{ -\lambda ( r_1  + r_2 ) - s ( F_0(r_1/2)  + F_0(r_2/2)) \right\} d r_1 d r_2.
	}
	The expression under the integral in~\eqref{eq:th1eq8} is symmetrical in $(r_1,r_2)$ which proves the result.
\end{proof}

\subsection{Proof of Theorem~\ref{th:plane_load_moments}}

\begin{proof}
	We first write the definition of the load:
	\als{ 
		\rho_0^N &= \left(  \int_{\CC} f(z,{\cal I}(z))  \indic_{ {\cal C}(0)}(z) dz  \right)^N \sk
		&= \int_{\CC^N}  \left[ \prod_{i=1}^N f(z_i, {\cal I}(z_i)) \right] \indic_{ {\cal C}(0)^N}(z_1,\cdots,z_N) dz_1 \cdots dz_N.
	}
	Taking expectations:
	\als{ 
		&\expecp{\rho_0^N} \sk
		&= \int_{\CC^N} \expecp{ \left[ \prod_{i=1}^N f(z_i, {\cal I}(z_i)) \right] \indic_{ {\cal C}(0)^N}(z_1,\cdots,z_N)  } dz_1 \cdots dz_N.
	}
	The probability of belonging to ${\cal C}(0)^N$ is given as a void probability of the \ac{PPP}:       	
	\als{ 
	\expecp{ 1_{ {\cal C}(0)^N }(z_1,\cdots,z_N)} &= \probap{ \Phi \cap {\cal B}_N(z_1,\cdots,z_N) = \Set{0}} \sk
	&= \exp(-\lambda  B_N(z_1,\cdots,z_N)),
	}
	and by conditioning on $(z_1,\cdots,z_N) \in {\cal C}(0)^N$, we obtain the first result:
	\als{ 
		 \expecp{\rho_0^N} = \int_{\CC^N} & \expecp{ \prod_{i=1}^N f(z_i, {\cal I}(z_i)) | z_1,\cdots,z_N  } \sk  
		 &\exp(-\lambda B_N(z_1,\cdots,z_N) ) dz_1 \cdots dz_N.
	}
	
	Conditional to $(z_1,\cdots,z_N) \in {\cal C}(0)^N$, $\Phi \setminus \Set{0}$ is a \ac{PPP} on $\CC \setminus {\cal B}_N(z_1,\cdots,z_N)$ with intensity $\lambda$. Hence applying proposition~\ref{prop:shot_noise_laplace} proves the second result.
\end{proof}

\subsection{Proof of Theorem~\ref{th:plane_load_var}}\label{subsec:proof:plane_load_var}
\begin{proof}
 	a) Specialization of theorem~\ref{th:plane_load_mean} to $N=2$ gives:
 	\eq{\label{eq:th4eq1} 
 	 \expecp{ \rho_0^2 } = \int_{\CC^2} L(z,z^\prime) \exp( - \lambda B_2(z,z^\prime)) dz dz^\prime.
 	}
 	In polar coordinates~\eqref{eq:th4eq1} becomes:
 	\al{\label{eq:th4eq2} 
 		\expecp{ \rho_0^2 } &= \int_{(\RR^+)^2} \int_{[0,2 \pi]^2} \sk
 		&  r r^\prime L(r e^{i\theta},r^\prime e^{i\theta^\prime}) \exp( - \lambda B_2(r e^{i\theta},r^\prime e^{i\theta^\prime}) )dr dr^\prime d\theta d\theta^\prime.
 	}
 	The integrand in~\eqref{eq:th4eq2} has the following invariants:
 	\begin{itemize}
 	\item $(r e^{i\theta},r^\prime e^{i\theta^\prime}) \to (r^\prime e^{i\theta^\prime},r e^{i\theta})$  (symmetry)
 	\item $(\theta,\theta^\prime) \to (\theta - \theta^\prime , 0)$  (rotation invariance)
 	\item $B_2(r e^{i\theta},r^\prime e^{i\theta^\prime})  = r^2 B_2(e^{i\theta},r^\prime/r e^{i\theta^\prime}) = {r^\prime}^2 B_2(1, r/ {r^\prime} e^{i(\theta-\theta^\prime) })$, (homogeneity of $B_2$)
 	\end{itemize}
 	which gives the announced formula.
	
\end{proof}

\subsection{Proof of Corollary~\ref{cor:th:plane_load_var}}\label{subsec:cor:th:plane_load_var}

\begin{proof}
	By concavity of the logarithm:
	\eqs{
		f(z, {\cal I}(z)) \geq \frac{\log(2) \lambda_{us} \sigma}{w} \frac{ N_0 +  {\cal I}(z) }{ h(z) }.
	}
	Applying corollary~\ref{cor:plane_load_mean_linear_path}, with $f_0(z) = \frac{\lambda_{us} \sigma \log(2) N_0 }{ w h(z) } $ and $f_1(z) = \frac{\lambda_{us} \sigma \log(2)}{ w h(z) }$:
	\als{
	\expecp{ \rho_0 } & \geq 2 \pi \int_{0}^{+\infty} r \Lp f_0(r) +  2 \pi \lambda P \frac{r^{2-\eta}}{\eta - 2} f_1(r) \Rp e^{ - \lambda \pi r^2 } dr.
	}
	We have that:
	\als{
	& 2 \pi \int_{0}^{+\infty} r f_0(r) e^{ - \lambda \pi r^2 } dr \sk 
	&= \frac{2 \pi \log(2) \lambda_{us} \sigma  N_0}{w P}  \int_{0}^{+\infty} r^{\eta + 1} e^{ - \lambda \pi r^2 } dr \sk
	&= \frac{\log(2) \lambda_{us} \sigma}{w \lambda}  \frac{N_0  \Gamma( 1 + \eta/2)}{ P ( \sqrt{\pi \lambda} )^{\eta}},
	}
	and:
	\als{
	& 2 \pi \int_{0}^{+\infty} r 2 \pi \lambda P \frac{r^{2-\eta}}{\eta - 2} f_1(r) e^{ - \lambda \pi r^2 } dr \sk
	&= \frac{(2 \pi)^2 \log(2) \lambda_{us} \sigma \lambda}{(\eta - 2)w}  \int_{0}^{+\infty} r^{3} e^{ - \lambda \pi r^2 } dr \sk
	&=  \frac{\log(2) \lambda_{us} \sigma}{w \lambda}  \frac{2}{\eta - 2}. 
	}
	We have used the following identity twice (by a change of variables):
	\als{
		\int_{0}^{+\infty} r^a e^{ - \lambda \pi r^2 } dr &= \frac{\Gamma((a + 1)/2 )}{ 2 (\lambda \pi)^{(a + 1)/2} } , 
	}
	for $a \geq 0$. We obtain the announced result by summing:
	\als{
	\expecp{ \rho_0 } & \geq \frac{ \log(2) \lambda_{us} \sigma}{w \lambda} \Lp \frac{N_0 \Gamma( 1 + \eta/2)}{ P ( \sqrt{\pi \lambda} )^{\eta}}   +  \frac{2}{\eta - 2} \Rp.
	}
\end{proof}

\subsection{Proof of Theorem~\ref{th:dense_network_gen1}}\label{subsec:proof:dense_network_gen}
	Instead of proving Theorem~\ref{th:dense_network_gen1}, we prove a more general result, Theorem~\ref{th:dense_network_gen2}. We introduce assumption~\ref{a:dense_network} on the decay of the path loss function $h$. In particular, with assumption~\ref{a:power_law_path}, assumption~\ref{a:dense_network} holds.
\begin{assumptions}\label{a:dense_network}
	There exists $\epsilon > 0$, $z_0 \in \CC$ and a constant $C_h > 0$ such that $h(z) \leq  C_h \abs{z}^{-(2+\epsilon)}$ for $\abs{z} \geq \abs{z_0}$.
\end{assumptions}
\begin{theorem}\label{th:dense_network_gen2}
	Consider $\Phi$ a \ac{PPP} on $M$ with measure $\lambda m(dz)$ where $\lambda \in \RR^+$ and $m$ is a measure on $M$ such that the following integrals are finite for all $z \in M$:
\begin{itemize}
	\item $\int_{M} h(\norm{z - x}) m(dx) 	< +\infty$,
	\item $\int_{M} h(\norm{z - x})^2 m(dx) < +\infty$.
\end{itemize}
	Furthermore we assume that for all $z$, $\esssup_{x} h(\norm{z - x}) <+\infty$ where $\esssup$ is taken with respect to measure $m$ and that $\expec{G^2} < +\infty$.
	
	For all $N \geq 1$, $(z_1,\cdots,z_N) \in M^N$, $({\cal I}(z_1), \cdots ,  {\cal I}(z_N))$ converges in distribution to a multivariate normal distribution when $\lambda \to +\infty$. Namely:
		\als{
			\frac{1}{\sqrt{\lambda}} & ({\cal I}(z_1) - \expec{{\cal I}(z_1)}, \cdots ,  {\cal I}(z_N) - \expec{{\cal I}(z_N)}) \sk
			&\to  {\cal N}(0,\Sigma(z_1, \cdots ,z_N)), 
		}
	with $\Sigma(z_1, \cdots , z_N )$ its covariance matrix. The covariance is given by:
	\als{
		\cov( & {\cal I}(z_i) / \sqrt{\lambda} , {\cal I}(z_{i^\prime})/ \sqrt{\lambda} ) \sk 
		&= \expec{G^2} \int_{M} h(\norm{z_{i^\prime} - x})  h(\norm{z_i - x}) m(dx).
	}
\end{theorem}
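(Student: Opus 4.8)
The plan is to establish the multivariate convergence through transforms, using the closed form for the joint Laplace/Fourier transform of shot noise given in Proposition~\ref{prop:shot_noise_laplace}. By the Cram\'er--Wold device it would be enough to treat a single linear combination $\sum_i s_i {\cal I}(z_i)$, but since the joint transform is available explicitly it is just as convenient to handle the whole vector at once. Fix $(s_1,\dots,s_N)$ with strictly positive real parts and set $H(x) = \sum_{i=1}^N s_i\, h(\norm{z_i-x})$. Writing $\Lambda_\lambda(s)$ for the logarithm of the Laplace transform of the centered, $\sqrt{\lambda}$-scaled vector and using $\expec{{\cal I}(z_i)} = \lambda\,\expec{G}\int_M h(\norm{z_i-x})\,m(dx)$ together with Proposition~\ref{prop:shot_noise_laplace} applied at the rescaled arguments $s_i/\sqrt\lambda$, one obtains
\[
\Lambda_\lambda(s) = \sqrt{\lambda}\,\expec{G}\int_M H(x)\,m(dx) \;-\; \lambda\int_M\left[1-{\cal G}\!\left(\tfrac{H(x)}{\sqrt\lambda}\right)\right]m(dx).
\]
The goal is to show $\Lambda_\lambda(s)\to\frac12\,\expec{G^2}\int_M H(x)^2\,m(dx)$, which is the log-Laplace transform of ${\cal N}(0,\Sigma)$; the corresponding statement for purely imaginary arguments then gives convergence in distribution by L\'evy's continuity theorem.

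The computation rests on a second-order Taylor expansion of ${\cal G}$ at the origin. Since $\expec{G^2}<\infty$, ${\cal G}$ is twice differentiable at $0$ with ${\cal G}(0)=1$, ${\cal G}'(0)=-\expec{G}$, ${\cal G}''(0)=\expec{G^2}$, so that ${\cal G}(u)=1-\expec{G}\,u+\tfrac12\expec{G^2}\,u^2+u^2\varepsilon(u)$ with $\varepsilon(u)\to0$ as $u\to0$. Substituting $u=H(x)/\sqrt\lambda$ and multiplying by $-\lambda$, the first-order contribution is exactly $-\sqrt\lambda\,\expec{G}\int_M H\,m(dx)$, which cancels the centering term, leaving
\[
\Lambda_\lambda(s) = \tfrac12\,\expec{G^2}\int_M H(x)^2\,m(dx) \;+\; \int_M H(x)^2\,\varepsilon\!\left(\tfrac{H(x)}{\sqrt\lambda}\right)m(dx).
\]
Expanding the square, $\int_M H(x)^2 m(dx)=\sum_{i,i'} s_i s_{i'}\int_M h(\norm{z_i-x})h(\norm{z_{i'}-x})\,m(dx)$, so the leading term is precisely $\frac12\sum_{i,i'}s_is_{i'}\Sigma_{i,i'}$ with the stated covariance, once the error term is shown to vanish.

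The main obstacle is controlling that remainder integral uniformly, and this is exactly where the hypotheses enter. The essential-sup bound gives $|H(x)| \le \sum_i |s_i|\,\esssup_x h(\norm{z_i-x}) =: C_s<\infty$, so the argument $H(x)/\sqrt\lambda$ tends to $0$ \emph{uniformly} in $x$ as $\lambda\to\infty$; hence the remainder is dominated by $\big(\sup_{|u|\le C_s/\sqrt\lambda}|\varepsilon(u)|\big)\int_M H(x)^2\,m(dx)$. The squared-integrability assumption $\int_M h(\norm{z-x})^2 m(dx)<\infty$ (with Cauchy--Schwarz) makes $\int_M H^2\,m(dx)$ finite, while the first-moment integrability guarantees the centering term is well defined; thus the prefactor tends to $0$ and the error disappears. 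Collecting everything yields $\Lambda_\lambda(s)\to\frac12 s^\top\Sigma s$, and L\'evy's continuity theorem gives the claimed convergence in distribution. Theorem~\ref{th:dense_network_gen1} then follows as the special case $m(dx)=\indic_{\cal M}(x)\,dx$ and $h(r)=Pr^{-\eta}$, which satisfies Assumption~\ref{a:dense_network} and hence all the integrability requirements above, so that $\Sigma$ reduces to $\expec{G^2}P^2\int_{\cal M}(\norm{z_{i'}-x}\norm{z_i-x})^{-\eta}\,dx$.
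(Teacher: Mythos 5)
Your proof is correct and follows essentially the same route as the paper's: the joint Laplace transform from Proposition~\ref{prop:shot_noise_laplace}, a second-order Taylor expansion of ${\cal G}$ at the origin whose first-order term cancels the centering, the $\esssup$ hypothesis to make the argument $H(x)/\sqrt{\lambda}$ uniformly small, and L\'evy's continuity theorem. If anything, your domination of the remainder by $\bigl(\sup_{|u|\le C_s/\sqrt{\lambda}}|\varepsilon(u)|\bigr)\int_M H^2\,dm$, with Cauchy--Schwarz and the square-integrability assumption ensuring finiteness, is written out more carefully than the paper's corresponding step.
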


\begin{proof}
	We recall the Laplace transform of the shot noise from proposition~\ref{prop:shot_noise_laplace}:
	\al{\label{eq:th5eq1}
			&\expec{  \exp(  - \sum_{i=1}^N  s_i	{\cal I}(z_i))  } \sk
			&= \exp \left\{ - \int_{M} \left[ 1 - {\cal G}\left( s_i  \sum_{i=1}^N h(\norm{z_i - z}) \right) \right] \lambda m(dz) \right\}.
	}
	The expectation of shot noise is given by:
	\eq{ \label{eq:th5eq2}
		\expec{  {\cal I}(z_i) } =  \expec{G} \lambda \int_{M} h(\norm{z_i - x}) m(dx).
	}
	Replacing \eqref{eq:th5eq2} in \eqref{eq:th5eq1} we have that:
	\als{
	& \expec{  \exp(  - \sum_{i=1}^N  \frac{s_i}{\sqrt{\lambda}}	( {\cal I}(z_i) - \expec{ {\cal I}(z_i)}  ) ) } \sk
	&= \exp \Lb - \lambda \int_{M} \left[ 1 - {\cal G} \Lp \sum_{i=1}^N \frac{s_i h(\norm{z_i - z})}{\sqrt{\lambda}}  \Rp \Rd \Rd \sk
	  & \Ld \Ld + \expec{G} \sum_{i=1}^N \frac{s_i}{\sqrt{\lambda}} h(\norm{z_i - z})  \right] m(dz) \Rb.
	}
	Consider $s$ fixed and consider $\delta > 0$. Since $\esssup_x h(\norm{z - x}) <+\infty$, there exists $\lambda$ so that $m$-almost everywhere: 
	\eqs{
		\frac{\abs{ s h(\norm{z - x}) }}{\sqrt{ \lambda}}  < \delta.
	}
	Using a Taylor expansion for ${\cal G}$ in a neighborhood of $0$, there exists a function $a$ such that:
	\eqs{
		{\cal G}(\delta) = 1 -  \expec{G} \delta + \expec{G^2} \frac{\delta^2}{2} + \abs{\delta}^2 a(\delta ).
	}
	with $a(\delta) \to 0$ , $\delta \to 0^+$. Therefore:
	\al{\label{eq:th5eq6}	
	& \expec{  \exp(  - \sum_{i=1}^N  \frac{s_i}{\sqrt{\lambda}}	( {\cal I}(z_i) - \expec{ {\cal I}(z_i)}  ) )  } \sk
	&= \exp \left\{ a(\delta) - \frac{\expec{G^2}}{2} \int_{M} \left[   \sum_{i=1}^N s_i h(\norm{z_i - z}) \right]^2 m(dz) \right\} \sk
	 &= \exp \left\{ a(\delta) -  \Rd \sk 
	 & \Ld  \sum_{1 \leq i,i^\prime \leq N} \frac{\expec{G^2} s_i s_{i^\prime}}{2}  \int_{M}  h(\norm{z_i - z}) h(\norm{z_{i^\prime} - z}) m(dz) \right\}
	}
	 It is noted that we can choose $\delta$ arbitrarily small when $\lambda$ is arbitrarily large. Therefore the \ac{r.h.s.} of equation ~\eqref{eq:th5eq6}	is, up to a negligible term, the Laplace transform of the multivariate Gaussian distribution which proves the convergence in distribution.
	
	The covariance is obtained by inspection of the Laplace transform \eqref{eq:th5eq6}.
\end{proof}

\end{document}